\newtheorem{theorem}{Theorem}
\newtheorem{lemma}[theorem]{Lemma}
\newtheorem{corollary}[theorem]{Corollary}
\newtheorem{conjecture}[theorem]{Conjecture}
\theoremstyle{definition}
\newtheorem{remark}[theorem]{Remark}
\newtheorem{example}[theorem]{Example}
\newcommand{\eg}{{e.\,g.}}
\newcommand{\ie}{{i.\,e.}}
\newcommand{\OPT}{\operatorname{opt}} 
\newcommand{\calI}{\mathcal{I}} 
\newcommand{\calJ}{\mathcal{J}} 
\newcommand{\calM}{\mathcal{M}} 
\newcommand{\NP}{\mathsf{NP}} 
\newcommand{\OCNP}{\mathsf{Oracle CoNP}} 
\newcommand{\nats}{\mathbb{N}} 
\newcommand{\reals}{\mathbb{R}} 
\newcommand{\ints}{\mathbb{Z}} 
\newcommand{\nn}{_+} 
\newcommand{\eps}{\varepsilon}
\newcommand{\E}{\mathbb{E}} 
\newcommand{\con}{\operatorname{conv}} 
\begin{document}

\title{Robust randomized matchings\footnote{A preliminary version of this paper appeared in the proceedings of SODA 2015~\cite{matuschke2015robust}.}}

\author{Jannik Matuschke\footnote{TUM School of Management and Department of Mathematics, Technische Universit\"at M\"unchen} \and Martin Skutella\footnote{Institut f\"ur Mathematik, Technische Universit\"at Berlin} \and Jos\'{e} A. Soto\footnote{Departamento de Ingenier\'ia Matem\'atica and CMM, Universidad de Chile}}

\maketitle

\abstract{%
The following game is played on a weighted graph: Alice selects a matching $M$ and Bob selects a number $k$. Alice's payoff is the ratio of the weight of the $k$ heaviest edges of $M$ to the maximum weight of a matching of size at most $k$. If $M$ guarantees a payoff of at least $\alpha$ then it is called $\alpha$-robust. Hassin and Rubinstein~\cite{hassin2002robust} gave an algorithm that returns a $1/\sqrt{2}$-robust matching, which is best possible.

We show that Alice can improve her payoff to $1/\ln(4)$ by playing a randomized strategy. This result extends to a very general class of independence systems that includes matroid intersection, b-matchings, and strong 2-exchange systems. 
It also implies an improved approximation factor for a stochastic optimization variant known as the maximum priority matching problem and translates to an asymptotic robustness guarantee for deterministic matchings, in which Bob can only select numbers larger than a given constant.
Moreover, we give a new LP-based proof of Hassin and Rubinstein's bound.
}

\section{Introduction}
Hassin and Rubinstein~\cite{hassin2002robust} introduced the concept of \emph{robust matchings}, which can be described by the following zero-sum game:
Given a graph \mbox{$G=(V,E)$} with non-negative edge weights $w \colon E \to \reals\nn$, Alice selects a (not necessarily perfect) matching $M$ in $G$. Then Bob, an adversary, selects a bound $k$ on the number of allowed elements. Lastly, Alice outputs the
$k$ heaviest elements of $M$ and receives a payoff equal to the ratio between the weight of the output and the maximum weight of a matching in $G$ of
cardinality at most $k$. She wants to maximize her payoff while Bob wants to minimize it. A matching $M$ that guarantees a payoff of at least~$\alpha$~(for some $\alpha \in [0, 1]$) is called \emph{$\alpha$-robust}.
Hassin and Rubinstein~\cite{hassin2002robust} showed that the $p$-th power algorithm, \ie, the one where Alice returns a maximum matching with respect to
the $p$-th power weights $w^p$, provides a $\min\{2^{-1+1/p}, 2^{-1/p}\}$-robust solution. In particular, by setting $p=2$, this implies the existence
of a $1/\sqrt{2}$-robust matching in any graph. Figure~\ref{fig:tight} shows that this bound is best possible, as the depicted graph admits no matching that is more than $1/\sqrt{2}$-robust.

In this paper, we show that the bound of $1/\sqrt{2}$ can be overcome when allowing Alice to play a mixed strategy, \ie, instead of specifying a single deterministic matching, she announces a probability distribution on the matchings of the graph. Bob then chooses the cardinality based on his knowledge of the distribution. We show that in this natural extension of Hassin and Rubinstein's problem, Alice can always find a distribution that has an expected payoff of at least $1/\ln(4) > 1/\sqrt{2}$.
We complement this result by showing that our guarantee of $1/\ln(4)$ carries over to an asymptotic setting in which Alice specifies a deterministic matching but Bob's choice is restricted to numbers~$k \geq K$ for some constant~$K$. 
Using the minimax theorem for zero-sum games, we also show a close relation between our robust randomized matchings and the maximum priority matching problem, which can be interpreted as a stochastic variant of the robust matching problem.
Our robustness results for randomized matchings imply an improved approximation factor for this problem.
In addition, we also provide a new simple LP-based proof of Hassin and Rubinstein's original result~\cite{hassin2002robust} on the squared weight algorithm.

\begin{figure}[t]
\centering \includegraphics[scale=1]{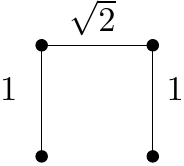}
\caption{Tight example for the existence of a $1/\sqrt{2}$-robust matching. The matching consisting of the single edge of weight~$\sqrt{2}$ is off by a factor of $1/\sqrt{2}$ from the optimum for~$k=2$. On the other hand, the matching consisting of the two edges of weight~$1$ is off by a factor of $1/\sqrt{2}$ from the optimum for~$k=1$.\label{fig:tight}}
\end{figure}

\subsubsection*{Notation}
An \emph{independence system} $\calI$ on a ground set $E$ is a nonempty family of subsets of~$E$, called \emph{independent sets}, such that $\calI$ is downward closed, \ie, $S \subseteq T \in \calI$ implies $S \in \calI$. Let $w\colon E \to \reals\nn$ be a non-negative weight function on the ground set of~$\calI$. For any set $S \subseteq E$ define $w(S) = \sum_{e \in S} w_e$. For any~$k \in \nats$, let~\mbox{$\OPT_k = \max \{w(S) \, : \, S \in \calI, \, |S| \leq k\}$}, and for any independent set $S \in \calI$ denote by $S_k \subseteq S$ the subset of the $k$ heaviest elements in $S$ (breaking ties consistently) and~$S_k:=S$ if~$|S|<k$.  An independent set $S \in \calI$ is $\alpha$-robust with respect to $w$ if and only if
$w(S_k) \geq \alpha \cdot \OPT_k$
for all $k \in \nats$. We also use~\mbox{$\Delta(\calI)=\{\lambda \in [0,1]^\calI \colon \sum_{I \in \calI} \lambda_I = 1\}$} to denote the set of probability distributions over $\calI$. These distributions are also called \emph{randomized independent sets}. In particular, if $\calM$ denotes the set of matchings of a given graph $G$ then the elements of $\Delta(\calM)$ are called \emph{randomized matchings}.

\subsection{Related work}
The idea of finding a combinatorial object (a weighted matching in this case) that is robust against an adversarial choice of cardinality has also
been studied on other domains. Fujita, Kobayashi, and Makino~\cite{fujita2013robust} proved that the above results for matchings hold for the problem of
finding common independent sets of two matroids. They also showed that computing the maximum robustness factor $\alpha$ for a given instance
is $\NP$-hard even for the case of matchings in bipartite graphs. 
For general independence systems, Hassin and Rubinstein~\cite{hassin2002robust}, extending results from~\cite{hausmann1980worst}, observed that the greedy algorithm computes a $\nu$-robust set, where $\nu$ is the rank quotient of the system\footnote{The \emph{rank quotient} $\nu(\calI)$ of independence system $\calI$ is defined as
$\min_{X \subseteq E} \underline{r}(X)/\overline{r}(X)$, where $\underline{r}(X)$ and $\overline{r}(X)$ are the smallest and largest cardinality of a maximal independent set inside $X$, respectively.}.
Kakimura and Makino~\cite{kakimura2013robust} showed that every
independence
system admits a $1/\sqrt{\mu}$-robust solution, where $\mu$ denotes the \emph{extendibility}\footnote{An independence system $\calI$ is \emph{$\mu$-extendible} if for every $X, Y \in \calI$ and $y\in Y\setminus X$ there is $Z \subseteq  X \setminus Y$, $|Z|\leq \mu$ such that $(X \cup \{y\}) \setminus Z \in \calI$. In particular, the extendibility of a system is always at least the inverse of the rank quotient.} of the system, 
and in fact, this solution can be found by the (not necessarily polynomial-time computable) squared weight algorithm. 
Hassin and Segev~\cite{hassin2006robust} considered the problem of finding a small subgraph of a given graph that contains for every $k \in \nats$ a spanning tree (or path, respectively) of cardinality at most $k$ and weight at least $\alpha$ times the weight of a maximum weight solution of size~$k$. They show that $\alpha|V|/(1-\alpha^2)$ edges suffice and give polynomial time algorithms for robust solutions using the results in~\cite{hassin2002robust}.
In a wider context, Hassin and Rubinstein's work inspired other robustness results, \eg, in graph coloring~\cite{fukunaga2008robust}, knapsack problems~\cite{kakimura2012computing,disser2014packing}, and sequencing~\cite{megow2013instance}.
More recently, D\"utting, Roughgarden, and Talgam-Cohen~\cite{duetting2014modularity} discovered an application of Hassin and Rubinstein's analysis of the squared weight algorithm in the design of double auctions. 

Results on the impact of randomization in robust optimization are much scarcer. 
Bertsimas, Nasrabadi, and Orlin~\cite{bertsimas2016power} show that the randomized version of network flow interdiction is equivalent to the maximum robust flow problem and use this insight to obtain an approximation algorithm for both problems. Mastin, Jaillet, and Chin~\cite{mastin2015randomized} consider a randomized version of the minmax (additive) regret model. They show that if an optimization problem can be solved efficiently, then also the corresponding randomized minmax regret version can be solved efficiently. 
Inspired by a preliminary version of our work, Kobayashi and Takazawa~\cite{kobayashi2016randomized} provide an analysis of the randomized robustness of general independence systems. 
They obtain a robustness guarantee of $\Omega(1/\log(\rho))$ for general systems and a guarantee of $\Omega(1/\log(\mu))$ for systems induced by instances of the knapsack problem (here $\mu$ is again the extendibility of the system, while $\rho$ is the ratio of the largest size of an independent set to the smallest size of a non-independent set minus $1$). They also construct instances that do not allow for a better than $O(\log \log (\mu) / \log (\mu))$-robust or $O(\log \log (\rho) / \log(\rho))$-robust randomized independent set, respectively.

\subsection{Our Contribution}
We analyze natural extensions of Hassin and Rubinstein's robust matching game and also present a new proof for the $1/\sqrt{2}$-robustness of the squared weight algorithm. Our main results are the following.

\subsubsection*{Randomized robustness}
We show that Alice can improve on the guarantee of $1/\sqrt{2}$ when allowing her to play a randomized strategy.
For this setting, we provide a simple algorithm that allows Alice to receive an expected payoff of $1/\ln(4)$ by rounding the weights according to a randomized parameter and then computing a lexicographically optimal matching with respect to these rounded weights.
The result is based on the insight that if all weights are powers of $2$, then any lexicographically maximum matching is $1$-robust. In fact, we prove this property for a very general family of independence systems that are characterized by a concave behavior of the optimal solution value $\OPT_k$ depending on the cardinality~$k$. This family forms a strict subset of the 2-extendible systems for which deterministic $\sqrt{2}$-robustness is possible, but includes very interesting cases, such as matroid intersections, $b$-matchings, and strong $2$-exchange systems. 

\subsubsection*{Maximum priority matching} 
Robust matchings provide a concept for mitigating the worst-case deviation from the optimum when the cardinality bound is unknown. In cases where we can assume knowledge of the distribution of the cardinality constraint, an alternative objective is to find a (deterministic) matching maximizing the expected weight of the heaviest $k$ edges, where~$k$ is a random variable with known distribution. This stochastic variant of the problem is known as the maximum priority matching problem. We observe that this problem is essentially equivalent to finding Alice's best response to a randomized strategy played by Bob in the robust matching game. Therefore, our analysis of robust randomized matchings together with the minimax theorem for zero-sum games implies an improved $1/\ln(4)$-approximation factor for the maximum priority matching problem.

\subsubsection*{Asymptotic robustness}
Motivated by the observation that worst-case examples for the best possible robustness guarantee are attained on very small graphs, we consider an asymptotic setting in which Alice specifies a deterministic matching but Bob's choice is restricted to numbers $k \geq K$ for some constant $K$. We show that this setting is connected to the randomized setting in two ways. On the one hand, we show that in both settings, Alice's payoff cannot exceed a value of $(1 + 1/\sqrt{2})/2$. On the other hand, we give a general technique for transforming any distribution on a constant number of matchings into a deterministic matching while attaining almost the same robustness guarantee as the former for large cardinalities. We use this to show that Alice can always obtain a payoff of $1/\ln(4) - \eps$, where $\eps > 0$ depends only on $K$ and becomes arbitrarily small as $K$ increases.

\subsubsection*{An LP-based proof for the squared weight algorithm}
Hassin and Rubinstein's proof of the $\sqrt{2}$-robustness of matchings computed by the squared weight algorithm~\cite{hassin2002robust} is based on an optimization problem whose variables are the edge weights of worst-case instances for the algorithm.
The proof involves a sequence of technical arguments for showing that the worst case is attained for graphs with $3$ or $5$ edges.
Fujita, Kobayashi, and Makino's~\cite{fujita2013robust} generalization of this result also makes use of an optimization problem over the weights, additionally using the dual of the matroid intersection polytope to split the weight of each edge in two separate parts.
We give a new proof of Hassin and Rubinstein's original result based on LP duality.
In contrast to both proofs mentioned above, our proof does not involve deducing properties of the worst case instance. Instead, we give an explicit construction of a feasible dual solution to the $k$-matching LP for each cardinality $k$, using an optimal dual solution of the matching LP with squared weights.

\section{Randomized robustness}\label{sec:randomized}

Consider the natural extension of the zero-sum game in which Alice can play a mixed strategy, \ie, instead of choosing a single matching $M$, Alice now specifies a distribution~\mbox{$\lambda \in \Delta(\mathcal{M})$} on the set $\mathcal{M}$ of matchings of the graph. Then Bob announces a cardinality $k \in \nats$ and Alice receives the payoff $\E_{M \sim \lambda}[w(M_k)]/\OPT_k$.

A distribution $\lambda$ of matchings is called \emph{$\alpha$-robust} with respect to $w$ if and only if
\[\E_{M \sim \lambda}[w(M_k)] \ \geq \ \alpha \cdot \OPT_k \qquad \text{for all } k \in \nats.\]
The \emph{randomized robustness ratio} $\alpha^*(G)$ of the graph $G$ is the largest value $\alpha^*$ such that for any weight function, there exists an $\alpha^*$-robust
distribution. Naturally,
\[1/\sqrt{2} \leq \alpha(G) \leq \alpha^*(G) \leq 1\]
for all graphs $G$.
In this section, we will show that Alice can always construct a $1/\ln(4)$-robust randomized matching. In fact, our result holds in a much more general setting of a larger class of independence systems, which we call \emph{bit-concave}.

Before we turn our attention to this positive result, we also give an upper bound on the best randomized robustness factor that can be guaranteed on arbitrary graphs. Interestingly, the currently best-known upper bound stems from the same worst-case example as for the deterministic setting.
\medskip

\begin{example}
	For the graph $G$ given in Figure~\ref{fig:tight}, $\alpha^*(G) = (1 + 1/\sqrt{2})/2$. 
	Assume that the optimal randomized matching $M^*$ chooses the matching containing only the central edge with probability $p$ and chooses the matching with the two outer edges with probability~\mbox{$1-p$} (note that we can assume without loss of generality that only maximal matchings appear in the support of the distribution). Then $\E[w(M^*_1)]/\OPT_1 = p + (1-p)/\sqrt{2}$ and $\E[w(M^*_2)]/\OPT_2 = p/\sqrt{2} + (1 - p)$. Therefore the robustness of the matching is \begin{align*}
	\min \{p + (1 - p)/\sqrt{2},\ p/\sqrt{2} + (1 - p)\},
\end{align*} which is maximized for $p = 1/2$.
\end{example}
\medskip

Note that the notion of randomized robustness naturally generalizes to independence systems by simply replacing matchings with independent sets, \ie, we replace $\calM$ by an independence system~$\calI$. We will use this more general notation throughout the remainder of this section.

\subsubsection*{Bit-functions and good systems}
Let $(E, \calI)$ be an independence system and $w\colon E \to \reals\nn$ be  a weight function. 
We define the \emph{lexicographic order} on $\calI$ with respect to $w$ as follows.
For two sets $S, T \in \calI$ consider the two sequences $(w_e)_{e\in S}$ and $(w_e)_{e\in T}$, respectively, each sorted in order of decreasing weight. We say that $S$ is \emph{$w$-lexicographically larger} than $T$ if $(w_e)_{e\in S}$ is lexicographically larger than $(w_e)_{e\in T}$.
A set is \emph{$w$-lexicographically maximal} if no other set is $w$-lexicographically larger than it. 
Observe that for every weight function $w$, there is a constant $C > 0$ such that if we define $w^*_e := w_e^C$ then for any pair of sets $S$, $T \in \calI$, we have $w^*(S)\geq w^*(T)$ if and only if $S$ is $w$-lexicographically larger than $T$. Instead of $w$-lexicographically maximal sets, we will therefore refer to $w^*$-maximal sets in the following. 
In particular, if we can compute a maximum weight independent set for a system in polynomial time, we can also find a $w^*$-maximal independet set in polynomial time\footnote{Note that we can assume $w_e \in \{1, \dots, |E|\}$ for all $e \in E$ w.l.o.g.\ since we are looking for a lexicographically maximum solution. Therefore, it is sufficient to choose $C = |E|$ and the weights $w^*$ can thus be assumed to be of polynomial encoding size.} (note however that simply running the greedy algorithm does not necessarily yield the desired result, as illustrated in Figure~\ref{fig:lex}).

\begin{figure}[ht]
	\centering \includegraphics[scale=1]{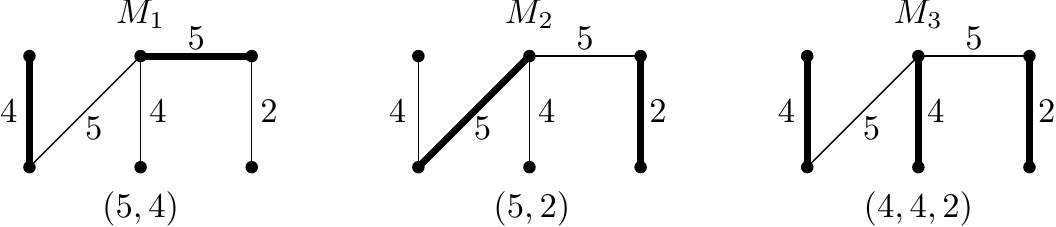}
	\caption{The figure shows three maximal matchings of a given graph on 6 vertices and their corresponding sequences of weights. Observe that $w^*(M_1)>w^*(M_2)>w^*(M_3)$, $w(M_3)>w(M_1)>w(M_2)$, $M_1$ is the $w^*$-maximal matching, and $M_3$ is the $w$-maximal one. Both $M_1$ and $M_2$ are possible outcomes of the greedy algorithm.}\label{fig:lex}
\end{figure}

A function $w\colon E \to \reals\nn$ is called a \emph{bit-function} if $w_e=2^{\ell_e}$ for all $e\in E$, with $\ell_e \in \ints$. A system $\calI$ is called \emph{good} if, for every bit function $w$, the $w^*$-maximal independent sets are 1-robust for $w$.
We will later give a characterization of good systems that comprises a very general family of independence systems, including the set of matchings or $b$-matchings of a graph, the intersection of two matroids, and strong 2-exchange systems. See Figure \ref{fig:good} for a matching example.

\begin{figure}[ht]
	\centering \includegraphics[scale=1]{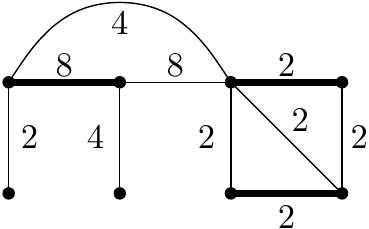}
	\caption{Instance with a bit-function weighting $w$. One $w^*$-maximal matching $L$ is shown with thick lines. The matching $L$ is $1$-robust for $w$ since $w(L_1)=\OPT_1=8$, $w(L_2)=\OPT_2=10$, and for every $k\geq 3$, $w(L_k)=\OPT_k=12$.}\label{fig:good}
\end{figure}

\medskip
The remainder of this section is organized as follows. In Section~\ref{sec:ln4-algorihm}, we give a simple algorithm for computing a $1/\ln(4)$-robust independent distribution in any good system. We then provide an alternative characterization of good systems in Section~\ref{sec:bit-concave}, showing that for every bit function in a good system, the value $\OPT_k$ is a concave function of the cardinality $k$. In Section~\ref{sec:good-examples} we then give several examples of good systems. Finally, in Section~\ref{sec:2-extendible} we show that good systems are a strict subclass of $2$-extendible systems.

\subsection{An algorithm for computing $1/\ln(4)$-robust independent distributions in good systems}\label{sec:ln4-algorihm}

Our interest in good systems is motivated by the insight that we can round arbitrary weight functions to powers of~$2$, losing only a factor of $1/\ln(4)$ in expectation.
We exploit this fact in the algorithm described in the proof of the following theorem.

\begin{theorem}\label{thm:ln4-robust}
If $\calI$ is a good system, then $\alpha^*(\calI)\geq 1/\ln(4)$.
\end{theorem}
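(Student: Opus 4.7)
The plan is to promote the $1$-robustness that goodness guarantees when weights are powers of two into an approximate robustness for arbitrary weights, by randomly rescaling the weights before rounding. For an arbitrary weight function $w$ I would sample $\theta$ uniformly in $[0,1]$ and define the bit function
\[
\tilde{w}^\theta_e \;:=\; 2^{\lfloor \log_2 w_e - \theta \rfloor}.
\]
The algorithm then outputs a $(\tilde{w}^\theta)^*$-maximal independent set $L^\theta$, which by the definition of a good system is $1$-robust with respect to $\tilde{w}^\theta$: $\tilde{w}^\theta(L^\theta_k) \geq \widetilde{\OPT}^\theta_k$ for every $k$.

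To convert this into a statement involving the original weights, I would rely on two properties of the rounding. First, $\tilde{w}^\theta_e$ is a monotone function of $w_e$, so by a consistent tie-breaking rule the $k$ heaviest elements of $L^\theta$ under $\tilde{w}^\theta$ coincide with those under $w$; the same set $L^\theta_k$ serves both robustness statements. Second, by construction $\tilde{w}^\theta_e \leq w_e/2^\theta$, i.e.\ $w_e \geq 2^\theta \tilde{w}^\theta_e$ pointwise. Letting $S^*$ be a $w$-optimal independent set of size at most $k$, these two facts yield the pointwise chain
\[
w(L^\theta_k) \;\geq\; 2^\theta\,\tilde{w}^\theta(L^\theta_k) \;\geq\; 2^\theta\,\widetilde{\OPT}^\theta_k \;\geq\; 2^\theta\,\tilde{w}^\theta(S^*) \;=\; \sum_{e\in S^*} 2^\theta \tilde{w}^\theta_e.
\]

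Taking the expectation over $\theta$, the entire argument reduces to computing the per-edge expectation $\E_\theta[2^\theta \tilde{w}^\theta_e]$. A direct integration, splitting $[0,1]$ at the fractional part of $\log_2 w_e$, should show that this expectation equals exactly $w_e/\ln 4$ for every edge, independently of where $w_e$ sits between consecutive powers of two. This gives $\E[w(L^\theta_k)] \geq w(S^*)/\ln 4 = \OPT_k/\ln 4$ uniformly in $k$, which is the claimed randomized robustness.

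The main obstacle, conceptually, is to choose the rounding so that two competing requirements hold simultaneously: (i) the inequality $w_e \geq 2^\theta\tilde{w}^\theta_e$ is pointwise, so that a bound on $\tilde{w}^\theta(L^\theta_k)$ immediately translates into a bound on $w(L^\theta_k)$; and (ii) the compensating factor $2^\theta$ recovers, in expectation, a universal fraction of $w_e$. The shift by $\theta$ inside the floor and the factor $2^\theta$ outside are chosen to be exactly dual, so that the per-edge expectation collapses to $w_e/\ln(4)$ for every $w_e$; the rest of the proof is then just bookkeeping on top of the definition of a good system.
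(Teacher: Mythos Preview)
Your proposal is correct and follows essentially the same approach as the paper: randomly shift before rounding to a bit-function, take a lexicographically maximal set, and use pointwise comparison plus the per-element expectation $\E[2^\theta\tilde{w}^\theta_e]=w_e/\ln 4$. The only cosmetic differences are that the paper computes this expectation via the measure-preserving map $x\mapsto\lfloor\ell-x\rfloor-(\ell-x)$ rather than by splitting the integral at the fractional part, and that the paper handles the ``which top-$k$'' issue implicitly through consistent tie-breaking rather than your explicit monotonicity remark.
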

\begin{proof}
The idea is to round all weights to powers of $2$ in a randomized fashion.
Without loss of generality we can assume $w$ to be strictly positive (by replacing $0$ weights by a very small $\varepsilon > 0$).
Define for every element $e \in E$, and every $x\in [0,1]$ the following values:
\begin{align*}
\ell_e & \; := \; \log_2(w_e),\\
\ell^{[x]}_e & \; := \; \lfloor \ell_e-x\rfloor,\\
w^{[x]}_e & \; := \; 2^{\ell^{[x]}_e} \; = \; w_e \cdot 2^{\ell^{[x]}_e-\ell_e}.
\end{align*}

For every $x$, the function $w^{[x]}\colon E \to \reals\nn$ is a bit-function. Since $\calI$ is a good system, there exists a feasible set $S^{[x]} \in \calI$ that is 1-robust for $w^{[x]}$. We construct a randomized solution to the instance $(\calI, w)$ by using the following algorithm: Select $x$  from the uniform distribution $U[0,1]$ and return $S^{[x]}$.

We now show that the above algorithm returns a distribution that is $1/\ln(4)$-robust. Given $k$, let $S^*$ be the feasible set in $\calI_k:=\{S\in\calI\, : \,|S|\leq k\}$ that maximizes $w$. We have
\begin{align*}
\E[w(S^{[x]}_k)] & \ \geq \ \E[2^x \cdot w^{[x]}(S^{[x]}_k)] \ \geq \ \E[2^x \cdot w^{[x]}(S^*)]\\
&\ = \, \sum_{e\in S^*} \E[2^x \cdot w^{[x]}_e]\ =\, \sum_{e\in S^*} w_e \, \E[2^{\ell^{[x]}_e - (\ell_e - x)}] \\
&\ = \ w(S^*) \cdot \E_{u\sim U[0,1]}[2^{-u}]\ =\ \frac{\OPT_k}{\ln(4)},
\end{align*}
where the first inequality follows from the fact that $w_e = w^{[x]}_e2^{\ell_e- \lfloor \ell_e-x\rfloor}\geq 2^xw^{[x]}_e$ for every $e$, the second inequality holds since $S^{[x]}_k$ is 1-robust for $w^{[x]}$, and the penultimate equality follows from the fact that $f : [0, 1] \rightarrow [-1, 0]$ defined by $f(x) := \lfloor\ell - x\rfloor - (\ell - x)$ preserves the uniform measure.
This concludes the proof.
\end{proof}

\begin{corollary}\label{cor:computation}
A $1/\ln(4)$-robust independent distribution can be explicitly found by solving $|E|$ instances of the maximum weight independent set problem on $\calI$.
\end{corollary}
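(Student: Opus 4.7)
The plan is to derandomize the continuous randomization over $x \in [0,1]$ used in the proof of Theorem~\ref{thm:ln4-robust} by observing that the rounded weight function $w^{[x]}$ takes only finitely many distinct values as $x$ ranges over $[0,1]$. The first step is to note that for each element $e \in E$, the integer exponent $\ell^{[x]}_e = \lfloor \ell_e - x \rfloor$ is a step function of $x$ with exactly one breakpoint in $(0,1]$, located at $x = f_e := \ell_e - \lfloor \ell_e \rfloor$, the fractional part of $\ell_e$. Consequently the map $x \mapsto w^{[x]}$ is piecewise constant on $[0,1]$ with breakpoints contained in $\{f_e : e \in E\}$.

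Next, I would sort these breakpoints to obtain a partition of $[0,1]$ into at most $|E|+1$ half-open intervals on each of which $w^{[x]}$ is constant. On the leftmost interval $[0, \min_e f_e)$ one has $\ell^{[x]}_e = \lfloor \ell_e \rfloor$ for every $e$, while on the rightmost interval $[\max_e f_e, 1]$ one has $\ell^{[x]}_e = \lfloor \ell_e \rfloor - 1$ for every $e$; the two associated bit-functions therefore differ by a global factor of $2$ and induce identical lexicographic orders on $\calI$, so they yield the same $w^{[x]*}$-maximal independent set. Merging these two intervals leaves at most $|E|$ distinct instances to solve.

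For each of these at most $|E|$ distinct bit-functions $w^{(j)}$, I would compute a $w^{(j)*}$-maximal independent set $S^{(j)}$ by invoking the max weight independent set oracle once on the power weights $\bigl(w^{(j)}_e\bigr)^{C}$ with $C=|E|$, exactly as described in the footnote following the definition of $w^*$-maximal sets. Assigning $S^{(j)}$ the probability equal to the total Lebesgue measure of the $x$-values that select it then produces a distribution that is precisely the pushforward under $x \mapsto S^{[x]}$ of the uniform distribution on $[0,1]$ used in the proof of Theorem~\ref{thm:ln4-robust}. By that theorem, this discrete distribution is $1/\ln(4)$-robust, and it has been constructed with $|E|$ oracle calls as required. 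The only minor subtlety is the bookkeeping that merges the boundary intervals to match the bound $|E|$ rather than $|E|+1$; beyond that, no real obstacle arises.
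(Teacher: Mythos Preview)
Your proposal is correct and follows essentially the same approach as the paper: both identify the fractional parts $f_e = \ell_e - \lfloor \ell_e \rfloor$ as the breakpoints of the piecewise-constant map $x \mapsto w^{[x]}$, yielding at most $|E|+1$ distinct bit-functions. You actually go slightly further than the paper's own proof, which stops at the count $|E|+1$ and does not explicitly reconcile this with the stated bound of $|E|$; your observation that the bit-functions on the leftmost and rightmost intervals differ only by a global factor of~$2$---and hence share the same $w^*$-maximal independent set---cleanly closes that small gap.
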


\begin{proof}
Let~$q_e:=\lfloor\log_2(w_e)\rfloor$. The rounded weight of element~$e$ is~$w^{[x]}_e=2^{q_e}$ if~$x\leq \log_2(w_e)-\lfloor\log_2(w_e)\rfloor$ and~$w^{[x]}_e=2^{q_e-1}$ otherwise. Thus, there are at most~$|E|+1$ different bit-functions~$w^{[x]}$ for different values of $x \in [0,1]$.
\end{proof}

\begin{remark}\label{rem:tight-example}
The analysis of the algorithm described in the proof of Theorem~\ref{thm:ln4-robust} is tight. To see this, consider the following weighted graph. Let $n \in \nats$. There are $2^n$ vertices $v_0, \dots, v_{2^n - 1}$ and $n$ different types of edges. For $k \in \{0, \dots, n-1\}$, there are $2^k$ edges of type $k$, each of the form $\{v_i, v_{i + 2^k}\}$ for $i \in \{0, \dots, 2^k - 1\}$. All edges of type $k$ have weight $2^{(n - k)/n}$.
Note that for edge $e$ of type $k$, the randomized weight is $w^{[x]}_e = 1$ if $x \leq (n - k)/n$ and $w^{[x]}_e = 1/2$ otherwise. Therefore, the $w^*$-maximal matching $S^{[x]}$ maximizes the number of edges with $w^{[x]}_e = 1$, all of which are of type at most $k^* := \lfloor (1 - x)n \rfloor$. Note that this number is maximized by including all edges of type exactly $k^*$ and no edges of type $k < k^*$. We conclude that $\E[w(S^{[x]}_1)]/\OPT_{1} = \frac{1}{2} \sum_{k = 0}^{n - 1} \operatorname{Pr}(\lfloor (1-x)n \rfloor = k) \cdot 2^{\frac{n - k}{n}} = \frac{1}{n} \sum_{k = 0}^{n - 1} 2^{-\frac{k}{n}}$
and observe that this value converges to $1/\ln(4)$ as $n$ goes to infinity.
\end{remark}

\subsection{Good systems are bit-concave}\label{sec:bit-concave}
In order to give better characterizations of \emph{good systems}, we need to recall some definitions.

\subsubsection*{Deletions, contractions, truncations, and t-minors}
Let $\calI \subseteq 2^E$ be an independence system. For all $X \subseteq E$ we define the \emph{deletion} of $X$ in $\calI$ as the system
$\calI \setminus X = \{S \in \calI \, : \, S \cap X = \emptyset\}$.
Furthermore, if $X \in \calI$, we define the \emph{contraction} of $X$ in $\calI$ as the system $\calI / X = \{S \in E\setminus X \, : \, S \cup X \in \calI\}$. For $k \in \nats$ the \emph{$k$-truncation} of $\calI$ is $\calI_k = \{S \in \calI \, : \, |S| \leq k\}$. The elements of $\calI_k$ are called $k$-independent sets. A \emph{truncation-minor}, or simply \emph{t-minor} of $\calI$ is any system that can be obtained from $\calI$ via iterative deletions, contractions or truncations.

\subsubsection*{Bit-concave systems}

An independence system~$\calI$ is called \emph{bit-concave} if, for every bit-function~$w$, the function $\OPT : \nats \rightarrow \reals\nn$ is concave, \ie, $\OPT_k + \OPT_{k+2} \leq 2\cdot \OPT_{k + 1}$ for all $k \in \nats$.\footnote{Note that this inequality is trivially true for $k = 0$, since $\OPT_0 = 0$ and $\OPT_2 \leq 2 \OPT_1$ in any independence system.}\\

The main result of this part states that good systems and bit-concave systems are equivalent.

\begin{theorem}\label{thm:bit-concave}
For an independence system~$\calI$, the following are equivalent:
\begin{enumerate}
\item[(i)] $\calI$ is bit-concave.
\item[(ii)] All t-minors of $\calI$ are bit-concave.
\item[(iii)] For every weight function~$w$ and every t-minor $\calJ$ of $\calI$, the $w^*$-optimal feasible sets in $\calJ$ are also $w$-optimal.
\item[(iv)] $\calI$ is a good system.
\end{enumerate}
\end{theorem}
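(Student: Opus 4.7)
The plan is to close the cycle (iv) $\Rightarrow$ (i) $\Rightarrow$ (ii) $\Rightarrow$ (iii) $\Rightarrow$ (iv); note that (ii) $\Rightarrow$ (i) is automatic, since $\calI$ is a t-minor of itself. The implication (iv) $\Rightarrow$ (i) is direct: for any bit-function $w$, goodness produces a $w^*$-maximal set $S$ with $w(S_k) = \OPT_k$ for every $k$, so the first differences $\OPT_k - \OPT_{k-1}$ equal the $k$-th heaviest weight in $S$, which is non-increasing in $k$; this is exactly the concavity condition on $\OPT$.

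For (i) $\Rightarrow$ (ii) I would check each t-minor operation individually. Truncation at $t$ merely clips the concave non-decreasing sequence $\OPT^{\calI}_k$ at level $\OPT^{\calI}_t$, which preserves concavity. For deletion of $X$, given a bit-function $w$ on $E \setminus X$, I would consider the bit-functions $w_N$ on $E$ with $w_N(e) = 2^{-N}$ on $X$ and $w_N(e) = w(e)$ elsewhere; as $N \to \infty$, $\OPT^{\calI}_k(w_N) \to \OPT^{\calI \setminus X}_k(w)$ pointwise, and concavity passes to pointwise limits on $\nats$. Contraction $\calI / X$ is handled symmetrically using $w_N(e) = 2^N$ on $X \in \calI$, so that $\OPT^{\calI}_{k + |X|}(w_N) = |X| \cdot 2^N + \OPT^{\calI / X}_k(w)$ for large $N$; concavity of the left-hand side then transfers to the shifted right-hand side.

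For (iii) $\Rightarrow$ (iv), fix a bit-function $w$ and a $w^*$-maximal $S$ in $\calI$. The key sub-claim is that $S_k$ is itself $w^*$-maximal in the t-minor $\calJ = \calI_k$. Grouping the ground set into weight classes $W_1, W_2, \ldots$ of strictly decreasing weight and writing $n_j(T) := |T \cap W_j|$, lex-maximality of $S$ is equivalent to $(n_1(S), n_2(S), \ldots)$ being lex-maximal in $\calI$. A class-by-class argument shows that $(n_1(S_k), n_2(S_k), \ldots)$ is lex-maximal in $\calI_k$: no $T \in \calI_k$ can exceed $\min\{k, n_1(S)\} = n_1(S_k)$ on the first class, and matching that forces a constrained maximum on $W_2$ that $S_k$ also attains, recursing to the lower classes. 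Applying (iii) to $\calJ = \calI_k$ then gives $w(S_k) = \OPT_k$ for every $k$, which is exactly $1$-robustness.

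The hardest step is (ii) $\Rightarrow$ (iii), as one must produce the equality $w^*\text{-opt} = w\text{-opt}$ for every weight function, not only for bit-functions. My plan is first to settle the case of a bit-function $w$ in an arbitrary bit-concave t-minor $\calJ$ (which is bit-concave by (ii), and so are its further t-minors). An induction on $|E(\calJ)|$ works: pick a heaviest element $e^*$; removing it from the $w^*$-maximal $T$ yields a $w^*$-maximal set in $\calJ / \{e^*\}$ (because any lex-improvement there prepends $e^*$ to produce a lex-improvement in $\calJ$), and the inductive hypothesis then gives $w(T) - w_{e^*} = \max_{U \in \calJ / \{e^*\}} w(U)$; bit-concavity of $\calJ$ via $\OPT^{\calJ}_k - \OPT^{\calJ}_{k-1} \leq \OPT^{\calJ}_1 = w_{e^*}$ then rules out any strictly heavier rival in $\calJ \setminus \{e^*\}$. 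To extend from bit-function to arbitrary real weights, I would scale $w$ by a sufficiently large factor and round to powers of $2$; the rounded bit-function $\tilde w$ preserves the lex-order on $\calJ$ at fine enough scale, and a limit (or perturbation) argument transfers the equality back to $w$. This last extension is the principal obstacle, since lex-order is purely combinatorial whereas $w$-optimality is sensitive to exact weight magnitudes.
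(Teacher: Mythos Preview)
Your overall cycle and three of the four implications are fine and essentially match the paper's proof: (iv)$\Rightarrow$(i) is the same one-line argument; your limit arguments for (i)$\Rightarrow$(ii) are a clean variant of the paper's $\varepsilon$-perturbations; and your (iii)$\Rightarrow$(iv) makes explicit the claim ``$S_k$ is $w^*$-maximal in $\calI_k$'' that the paper states in one sentence.

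The implication (ii)$\Rightarrow$(iii), however, has two genuine problems.

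First, statement~(iii) as written (``for every weight function~$w$'') is actually \emph{false}, so your proposed extension from bit-functions to arbitrary real weights cannot succeed. Take $\calI$ to be the matchings of the path $a\!-\!b\!-\!c\!-\!d$ with $w(ab)=w(cd)=3$ and $w(bc)=4$. Matchings are bit-concave (the paper proves this), yet the unique $w^*$-maximal matching is $\{bc\}$ with weight~$4$, while $\{ab,cd\}$ has weight~$6$. The paper's own proof of (ii)$\Rightarrow$(iii) silently uses that $w$ is a bit-function (the line ``Since all the weights are powers of two, we deduce that $|\bar B|\geq 3$''), and that is all the cycle needs since (iii)$\Rightarrow$(iv) only invokes~(iii) for bit-functions. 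So drop the extension and restrict~(iii) to bit-functions.

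Second, even for bit-functions your induction is incomplete. Contracting a heaviest element $e^*\in T$ and applying the inductive hypothesis in $\calJ/\{e^*\}$ indeed gives $w(T)\geq w(S)$ for every competitor $S$ with $e^*\in S$. But the sentence ``bit-concavity of $\calJ$ via $\OPT^{\calJ}_k-\OPT^{\calJ}_{k-1}\leq \OPT^{\calJ}_1=w_{e^*}$ then rules out any strictly heavier rival in $\calJ\setminus\{e^*\}$'' does not go through: from $\OPT_k\leq \OPT_{k-1}+w_{e^*}$ you cannot bound $\OPT_k$ by $w_{e^*}+\max_{U\in\calJ/\{e^*\}}w(U)=w(T)$, because a $w$-optimal $k$-set need not contain~$e^*$ and hence need not restrict to $\calJ/\{e^*\}$. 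Concretely, nothing you have written excludes a set like $S=\{2,2,2\}$ beating $T=\{4\}$ in a bit-concave system (the point is precisely that bit-concavity is what forbids this, but your argument does not invoke it in a way that sees~$S$). The paper's proof takes a different route: after reducing via Observations~1 and~2 to the case $A\cap B=\emptyset$ and $A\cup B=E(\calJ)$, it deletes the \emph{smallest} element $a\in A$, uses the induction to identify the lex-max in $\calJ\setminus\{a\}$ with~$B$, and then exploits the bit-function structure to force $|B|\geq |A|+2$, which contradicts concavity via $\OPT_{|A|}=\OPT_{|A|+1}<\OPT_{|B|}$. The key missing idea in your approach is this pinning of $B$ to lie strictly between $A\setminus\{a\}$ and $A$ in lex-order, which is what lets the powers-of-two hypothesis bite.
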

\begin{proof} 
\noindent\textbf{[(i) $\Rightarrow$ (ii)]} Let $\calI$ be bit-concave and let $\calI'$ be a t-minor of $\calI$. For $w': E \rightarrow \reals\nn$, we denote by $\OPT'_k := \max \{w(S) \, : \, S \in \calI', |S| \leq k\}$ the weight of an optimal set of cardinality $k$ in $\calI'$.

We first consider the case that $\calI'$ is the $t$-truncation of $\calI$. Indeed, for all bit-functions~$w$ we have $\OPT'_k = \OPT_k$ if $k\leq t$ and $\OPT'_k = \OPT_t$ if $k > t$. Since $\OPT_k$ is concave, so is $\OPT'_k$.

Let us now check that also deletions are bit-concave. Let $X\subseteq E$, $\calI'=\calI\setminus X$ and $w'\colon E(\calI \setminus
X)\to \reals$ be a bit-function on $E\setminus X$. We define
\[\Delta(S, T, U):=\frac{w'(S)+w'(T)}{2} - w'(U)\]
for every triple $S, T, U \in \calI'$ and we further let $$\delta := \inf \{\Delta(S, T, U) \, : \, S, T, U \in \calI' \text{ and } \Delta(S, T, U) > 0\}.$$ Let $\varepsilon$ be any integer power of $2$ strictly smaller than $\delta/|X|$ and extend $w'$ by defining $w\colon E\to \reals$ as
$w(e) = w'(e)$ if $e\in E\setminus X$, and $w(e) = \varepsilon$ if $e \in X$.
We obtain
\begin{align*}
\frac{1}{2}(\OPT'_{k+2}+\OPT'_k) - \OPT'_{k+1}&\leq \frac{1}{2}(\OPT_{k+2}+\OPT_{k}) - \OPT'_{k+1}\\
&\leq \OPT_{k+1}-\OPT'_{k+1}\\
&\leq |X|\varepsilon < \delta.
\end{align*}
If the left hand side was positive, then it would be at least $\delta$, which is a contradiction. We conclude that the left hand side is non-positive
and so $\OPT'$ is concave.

Finally, consider the case that $\calI' = \calI/X$ is a contraction of $\calI$ for some $X\in \calI$. Let \mbox{$w'\colon E(\calI / X)\to \reals$} be a bit-function on $E\setminus X$. Let $M:=2^{|E|}\cdot\max_{e \in
E\setminus X} w'(e)$ and define $w\colon E\to \reals$ as
$w(e) = w'(e)$ if $e\in E\setminus X$ and $w(e) = M$ if $e \in
X$.
Observe that if $i\leq |X|$ then $\OPT_i = Mi$, and
if $i \geq |X|$ then $\OPT_i=M|X|+\OPT'_{i-|X|}$. Since $\OPT$ is concave we deduce that $\OPT'$ is concave as well.\\[5pt]
\noindent\textbf{[(ii) $\Rightarrow$ (iii)]} We will prove this claim by induction. More precisely, let $\calJ$ be a t-minor of $\calI$ and assume that
the claim holds for every strict t-minor of $\calJ$. The (trivial) base cases are when $\calJ$ contains 0 or 1 independent sets. Let $A$ be a $w^*$-maximum
feasible set of $\calJ$ and suppose by contradiction that there is a set $B \in \calJ$ with higher $w$-value.

\medskip
\noindent\emph{Observation 1:} If $w(B)>w(A)$ then $B\cap A =\emptyset.$
Otherwise, consider the contracted system (and strict t-minor) $\calJ'=\calJ/(B\cap A)$. $A\setminus B$ is $w^*$-maximum
in $\calJ'$. By induction $w(A\setminus B)\geq w(B\setminus A)$ and so, $w(A)\geq w(B)$.

\smallskip
\noindent\emph{Observation 2:} If $w(B)>w(A)$ then \mbox{$B\cup A =E(\calJ)$}.
To see this, assume $B\cup A \neq E(\calJ)$ and consider the strict t-minor \mbox{$\calJ'=\calJ\setminus( E\setminus (B\cup A))$}. $A$ is $w^*$-maximum in $\calJ'$, and so $w(A)\geq w(B)$.

\medskip
By Observations 1 and 2 we conclude that $B=E\setminus A$ is the unique feasible set in $\calJ$ with $w(B)>w(A)$. Now let $a$ be a smallest element in $A$ and let $C$ be a lexicographically maximum feasible set in $\calJ\setminus\{a\}$. By induction, the lexicographic maximality of $C$ implies $w(C) \geq w(B) > w(A)$ and thus $C = B$.
Therefore in particular
$w^*(A \setminus \{a\}) \leq w^*(B) < w^*(A)$ and $w(A \setminus \{a\}) \leq w(A) < w(B)$.

Let $k := |A|$. Since $B$ is lexicographically between $A \setminus \{a\}$ and $A$, we conclude that $B_{k-1}$ is lexicographically equal to $A \setminus \{a\}$, \ie,
$w^*(B_{k-1})=w^*(A \setminus \{a\})$. Let $\bar{B}=B \setminus B_{k-1}$. We have
\begin{align*}
w(a) & \ = \ w(A)-w(A \setminus \{a\}) \ = \ w(A)-w(B_{k-1}) \ <\ w(B)-w(B_{k-1}) \ = \ w(\bar{B}).
\end{align*}
But then,
$w^*(\bar{B}) < w^*(a) \;\text{ and }\; w(\bar{B}) > w(a)$.
This means that $\bar{B}$ is formed by elements whose individual weights are less than $w(a)$, but in total, they sum to a weight larger than~$w(a)$.
Since all the weights are powers of two, we deduce that $|\bar{B}|\geq 3$. Therefore,
\[\ell := |B|=|B_{k-1}|+|\bar{B}|\geq k+2.\]
To conclude the proof, note that since $B$ is the only feasible set of weight larger than $A$, 
\[w(A)=\OPT_{k} = \OPT_{k+1}<\OPT_{\ell},\]
which contradicts the bit-concavity of $\calJ$.\\[5pt]

\noindent\textbf{[(iii) $\Rightarrow$ (iv)]} Let $w$ be a bit-function, let $A$ be a $w^*$-optimum, and let $k\in \nats$. By
definition of lexicographic maximum we deduce that $A_k$ is also $w^*$-optimum in~$\calI_k$. Using $(iii)$, we conclude that $A_k$ is also
$w$-optimal in~$\calI_k$. As $k$ was chosen arbitrarily,~$A$ is 1-robust.\\[5pt]

\noindent\textbf{[(iv) $\Rightarrow$ (i)]} Let $\calI$ be a good system, $w$ a bit-function on $E(\calI)$, and $A=\{a_1,\ldots,a_\ell\}$ a
$w^*$-maximal set, where $w_{a_1}\geq w_{a_2} \geq \dots \geq w_{a_\ell}$. Since $\calI$ is a good system, for every $k$, $\OPT_k = w(\{a_1,\ldots, a_k\})$. It
follows that $\OPT_{k+2}-\OPT_{k+1}=w_{a_{k+1}} \leq w_{a_k}=\OPT_{k+1}-\OPT_k$ and therefore $\calI$ is bit-concave.
\end{proof}

\subsection{Examples of good systems}\label{sec:good-examples}

As a consequence of Theorems~\ref{thm:ln4-robust} and~\ref{thm:bit-concave}, any bit-concave system admits a $1/\ln(4)$-robust distribution. We now point out some examples of such systems. All of these are in fact \emph{concave} systems, \ie, systems for which the function $\OPT_k$ is concave in $k$ not only for bit-functions but for all nonnegative weight functions.
We are not aware of natural systems that are bit-concave but not concave.\footnote{The conference version of~\cite{kobayashi2016randomized} claimed that systems arising from unit density \textsc{Knapsack} instances fulfill this property. Unfortunately, this claim turned out to be wrong and it no longer appears in the journal version.}

The following lemma characterizes concave systems. A more general statement of this fact was proven by Calvillo~\cite{calvillo1980concavity}.

\begin{lemma}[Calvillo~(1980)~\cite{calvillo1980concavity}]\label{lem:trunc-polytope}
Let $\calI$ be an independence system. $\calI$ is concave \mbox{if and only if}
\begin{gather*}
\con(\{\chi_{S} \, : \, S \in \calI\})\cap \{x \in \reals\nn^E \, : \, x(E)\leq k\}
\ = \
 \con\{\chi_{S} \, : \, S \in \calI_k\}
\end{gather*}
for all  $k \in \nats$.
\end{lemma}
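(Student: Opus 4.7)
The plan is to pass to support functions and translate the polytope identity into a statement about the optimal value of a parametric linear program, which is then controlled by the concavity of $\OPT$ via standard Lagrangian duality. Write $P := \con\{\chi_S \colon S \in \calI\}$, $P_k := \con\{\chi_S \colon S \in \calI_k\}$, and $H_k := \{x \in \reals\nn^E \colon x(E) \leq k\}$; the inclusion $P_k \subseteq P \cap H_k$ is immediate.

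For the forward direction (polytope identity $\Rightarrow$ concave), fix any $w \geq 0$. The identity yields $\OPT_k(w) = \max\{w \cdot x \colon x \in P,\ x(E) \leq k\}$. If $x_i$ attains this maximum at $k = k_i$ for $i \in \{1,2\}$, then $\tfrac{1}{2}(x_1 + x_2) \in P$ has coordinate sum at most $\tfrac{1}{2}(k_1 + k_2)$. Taking $k_1 = k$ and $k_2 = k+2$ gives $\tfrac{1}{2}(\OPT_k(w) + \OPT_{k+2}(w)) \leq \OPT_{k+1}(w)$, which is exactly the concavity inequality.

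For the converse, closed convex polytopes coincide if and only if their support functions agree, so I need $f(k,w) := \max\{w \cdot x \colon x \in P \cap H_k\}$ to equal $\OPT_k(w)$ for every $w \in \reals^E$. Since $\calI$ is downward closed, any maximizer on either side zeroes out coordinates with $w_e < 0$; this reduces the problem to the deletion $\calI \setminus \{e \colon w_e < 0\}$ equipped with a nonnegative weight, and this deletion remains concave by extension of the weight by zero on the deleted elements, analogously to the deletion step in the proof of Theorem~\ref{thm:bit-concave}. Hence we may assume $w \geq 0$. Applying Lagrangian duality to the scalar constraint $x(E) \leq k$ yields
\[f(k,w) \ = \ \min_{\pi \geq 0} \Bigl[\pi k + \max_{j \in \nats}\bigl(\OPT_j(w) - \pi j\bigr)\Bigr].\]
Setting $\pi^* := \OPT_k(w) - \OPT_{k-1}(w) \geq 0$, concavity makes the discrete differences $\OPT_{j+1}(w) - \OPT_j(w)$ non-increasing in $j$, and a telescoping argument gives $\OPT_j(w) - \pi^* j \leq \OPT_k(w) - \pi^* k$ for every $j \in \nats$. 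Plugging $\pi^*$ into the dual bound produces $f(k,w) \leq \OPT_k(w)$, matching the trivial reverse inequality.

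The main delicate step is verifying that the left discrete derivative $\pi^* = \OPT_k - \OPT_{k-1}$ is the correct supergradient for the Lagrangian, i.e., that $j \mapsto \OPT_j - \pi^* j$ is maximized at $j = k$. This uses the non-increasing differences in both directions: for $j > k$ one telescopes $\OPT_j - \OPT_k \leq (j-k)\pi^*$, and for $j < k$ one telescopes $\OPT_k - \OPT_j \geq (k-j)\pi^*$. Everything else is standard LP duality together with the downward-closure bookkeeping used to reduce to nonnegative weights.
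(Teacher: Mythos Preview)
The paper does not supply its own proof of this lemma; it simply attributes the result to Calvillo~\cite{calvillo1980concavity} and moves on. Your proposal is therefore not competing against anything in the paper, and on its own merits it is correct. The forward direction via averaging optimizers for $k$ and $k+2$ is exactly the right one-line argument. For the reverse direction, your Lagrangian approach is the standard and clean way to do this: strong LP duality gives $f(k,w)=\min_{\pi\geq 0}\bigl[\pi k+\max_{S\in\calI}(w(S)-\pi|S|)\bigr]$, the identification $\max_{S\in\calI}(w(S)-\pi|S|)=\max_j(\OPT_j(w)-\pi j)$ holds for any $\pi\geq 0$ because the optimizer in $\calI_j$ may have size strictly less than $j$, and then the choice $\pi^*=\OPT_k-\OPT_{k-1}$ is precisely a supergradient of the concave sequence at $k$, giving $f(k,w)\leq\OPT_k(w)$.

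The one place you are a bit terse is the reduction to $w\geq 0$. What makes this work is that $P$ (and hence $P\cap H_k$) is \emph{down-monotone}: if $x\in P$ and $0\leq y\leq x$ then $y\in P$, since shifting mass from $\chi_S$ to $\chi_{S\setminus\{e\}}$ stays inside $P$ by downward closure of $\calI$. From this, any maximizer of $w\cdot x$ over $P\cap H_k$ can be pushed down to zero on the coordinates with $w_e<0$ without leaving the feasible region or decreasing the objective, and the face $\{x_e=0:w_e<0\}$ of $P$ is exactly the independence polytope of the deletion. Your remark that the deletion inherits concavity by extending the weight by zero is correct (and indeed simpler than the $\varepsilon$-perturbation used in Theorem~\ref{thm:bit-concave} for the bit-concave case, since here arbitrary nonnegative weights are allowed). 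With that one sentence added, the argument is complete.
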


\subsubsection*{Matchings}
Let $\calM$ be the set of matchings on $G$. Instead of using Lemma~\ref{lem:trunc-polytope} (whose hypotheses are not hard to verify), we show directly that $\OPT$ is a concave function for any weight function~$w$.
 Indeed, let $k \in \nats$ and let $M$ be an optimal $k$-matching and $M'$ be an optimal $k+2$-matching. Observe that if $|M| \geq |M'|$, then  $\OPT_k=\OPT_{k+1}=\OPT_{k+2}$. Otherwise, if \mbox{$|M|<|M'|$}, let $P$ be a component of the symmetric difference $M\Delta M'$ with~\mbox{$|P\cap M'|>|P\cap M|$}, i.e., $P$ is an alternating path of $M$ and $M'$ starting and ending at vertices not covered by $M$. Consider the matchings
\mbox{$M_1 = M\Delta P$} and \mbox{$M_2=M'\Delta P$} resulting from swapping the edges along the path $P$ from one matching to the other. Note that $$w(M) + w(M') = w(M_1) + w(M_2).$$ As $|M_1| = |M| + 1 \leq k + 1$ and $|M_2| = |M'|-1 \leq k+1$, we conclude that $$\OPT_k + \OPT_{k+2} = w(M_1) + w(M_2) \leq 2 \OPT_{k+1}.$$

\subsubsection*{Matroid intersection} Given two matroids $\calI_1, \calI_2 \subseteq 2^E$ consider their intersection $\calI=\calI_1\cap
\calI_2$. Let $r_1$ and $r_2$ be the rank functions of $\calI_1$ and $\calI_2$, respectively. It is known~\cite[Section 41.4]{Schrijver} that
\begin{gather*}
\con(\{\chi_I\colon I \in \calI\}) = \{x \in \reals_+^E \,\colon\, x(S)\leq r_i(S)\ \ \forall S\subseteq E, i\in\{1,2\}\}
\end{gather*}
and $\calI^k=\calI_1^k\cap \calI_2^k$, where $\calI_i^k$ is the family of independent sets of the $k$-truncated matroid with rank function $r_i^k(X)=\min \{r_i(X), k\}$. Hence
\begin{align*}
\con(\{\chi_I \, : \, I \in \calI^k\})
& \; = \; \{x \in \reals_+^E \, : \, x(S)\leq r_i^k(S) \ \ \forall S\subseteq E, i\in\{1,2\}\}\\
&\; = \; \con(\{\chi_I \, : \, I \in \calI\}) \cap \{x \in \reals\nn^E \, : \, x(S)\leq k\}.
\end{align*}
Thus, any intersection of two matroids is concave by Lemma~\ref{lem:trunc-polytope}.

\subsubsection*{Strong $2$-exchange systems (including $b$-matchings, SBO matchoids, and SBO matroid parity systems)} 
 
An independence system $\mathcal{I}$ is a \emph{strong $k$-exchange system} if for all $X$, $Y \in\mathcal{I}$, there is a bipartite graph $G(X,Y)$, called $k$-exchange graph from $X$ to $Y$, with color classes $X\setminus Y$ and $Y\setminus X$ of maximum degree at most $k$ such that
$$A\cup (Y\setminus N_{G(X, Y)}(A)) \in \mathcal{I} \quad \text{for all } A \subseteq X\setminus Y,$$
where $N_G(A)$ denotes the neighborhood of $A$ in $G$, \ie, all vertices in $V \setminus A$ adjacent to at least one vertex in $A$.

Strong $k$-exchange systems were introduced by Ward~\cite{ward2012oblivious} in the context of studying non-oblivious local search algorithms for constrained monotone submodular function maximization. We show that all strong $2$-exchange systems are concave and point out some interesting examples of such systems afterwards.

\begin{theorem}
Strong  $2$-exchange systems are concave.
\end{theorem}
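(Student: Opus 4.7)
The approach I would take is to mirror the matching argument from just above: given $w$-optimal sets $X \in \calI_k$ and $Y \in \calI_{k+2}$, I will construct two independent sets $Z_1, Z_2 \in \calI$, each of size at most $k+1$, whose weights satisfy $w(Z_1) + w(Z_2) \geq w(X) + w(Y)$. This immediately yields $\OPT_k + \OPT_{k+2} \leq 2\OPT_{k+1}$.

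First I would dispose of the trivial case $|Y| \leq k+1$, which forces $w(Y) \leq \OPT_{k+1}$ so that the bound is immediate. Assume therefore $|Y| = k+2$. The key tool is the strong 2-exchange bipartite graph $G := G(X,Y)$ on color classes $X' := X \setminus Y$ and $Y' := Y \setminus X$. Because both sides have maximum degree at most $2$, every connected component of $G$ is either an even cycle or a path. Since $|Y'| - |X'| = |Y| - |X| \geq 2$, a component-wise counting of the excess $|C \cap Y'| - |C \cap X'| \in \{-1,0,1\}$ forces the existence of a component $P$ of excess $+1$: either a path $P = y_0, x_1, y_1, \ldots, x_t, y_t$ with both endpoints in $Y'$ and $t \geq 1$, or an isolated vertex $y^* \in Y'$.

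In the path case, I would set $A_2 := \{x_1, \ldots, x_t\}$ and $A_1 := X' \setminus A_2$ and invoke the 2-exchange axiom to produce $S_{A_1}, S_{A_2} \in \calI$, where $S_A := A \cup (Y \setminus N_G(A))$. Because $P$ is a connected component, $N_G(A_2) = \{y_0, \ldots, y_t\}$ exactly, so $S_{A_2}$ is $Y$ with $Y \cap P$ swapped for $X \cap P$, yielding $|S_{A_2}| = k+1$. Analogously $S_{A_1}$ is the reverse swap applied to $X$, except that it also picks up the isolated $Y'$-vertices $Y'_0 := Y' \setminus N_G(X')$, which may be nonempty. Setting $Z_1 := S_{A_1} \setminus Y'_0$ (still in $\calI$ by downward closure) and $Z_2 := S_{A_2}$, a short telescoping gives $w(Z_1) + w(Z_2) = w(X) + w(Y)$ with $|Z_i| \leq k+1$.

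If instead $P$ is the isolated vertex $y^*$, I would use that $S_{X'} = X \cup Y'_0 \in \calI$ (exchange with $A = X'$), so $Z_1 := X \cup \{y^*\} \in \calI$ by downward closure; setting $Z_2 := Y \setminus \{y^*\}$ completes the construction since $|Z_i| \leq k+1$ and the weight identity $w(Z_1)+w(Z_2) = w(X)+w(Y)$ is immediate. The main obstacle will be the bookkeeping around the isolated $Y'$-vertices: they inflate $|S_{A_1}|$ beyond $k+1$ and must be trimmed away, but this works cleanly precisely because $Y'_0$ sits inside $S_{A_1}$ and not inside $S_{A_2}$, so removing it restores the size bound while the surplus weight cancels out exactly in the sum.
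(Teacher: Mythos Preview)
Your proof is correct but proceeds more directly than the paper's. The paper first observes that every stable set of $G(X,Y)$ lies in $\calI/(X\cap Y)$, then invokes concavity of the stable-set system of a bipartite max-degree-$2$ graph (proved by passing to matchings of its line graph) to find a single stable set $Z$ of size $\leq k'-1$ with $2w(Z)\geq w(X\setminus Y)+w(Y\setminus X)$. You instead mimic the matching argument directly: pick one component of $G(X,Y)$ with excess $|C\cap Y'|-|C\cap X'|=+1$ and swap it, producing two explicit $(k{+}1)$-independent sets with $w(Z_1)+w(Z_2)=w(X)+w(Y)$. Both proofs rest on the same structural fact---components of the exchange graph are paths and even cycles---but your version is self-contained and avoids the detour through stable-set concavity and the line-graph correspondence, while the paper's version is more modular in that it isolates the combinatorial core as a lemma about stable sets. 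Your bookkeeping around the isolated $Y'$-vertices (trimming $Y'_0$ from $S_{A_1}$ to restore the cardinality bound) is exactly right and is the one place where a careless execution would go wrong.
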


\begin{proof}
Observe that the defining condition for strong $2$-exchange systems implies that all the stable sets of $G = G(X,Y)$ belong to $\mathcal{I}/(X\cap Y)$. Indeed, for every stable set $Z\subseteq X\Delta Y$ of $G$,  $N_G(Z\cap X)\subseteq Y\setminus (X\cup Z)$, and so
$$(X\cap Y) \cup Z \subseteq (Z\cap X) \cup (Y \setminus N_G(Z\cap X)) \in \calI.$$
Let $\bar{G}$ be the bipartite graph obtained from \mbox{$G$} by adding pendant edges to all vertices in $X\Delta Y$ of degree less than 2, until all vertices in $X\Delta Y$ have degree 2. The stable sets of $G$ are in correspondence with  the matchings of the line graph of~$\bar{G}$. Therefore, the family of stable sets of $G$ form a concave system. We  use this to show that $\calI$ is concave. Indeed, let $X$ be an optimal $k$-independent set and $Y$ be an optimal $(k+2)$-independent set in $\calI$ with respect to a function $w$.  If $|Y|\leq |X|+1$ then \mbox{$\OPT_{k+1}=\OPT_{k+2}$} and therefore, $\OPT_{k}+\OPT_{k+2}\leq 2\OPT_{k+1}$. Suppose then that $|Y\setminus X|-|X\setminus Y|=|Y|-|X|\geq 2$, let $k' = |Y\setminus X| = k + 2 - |X \cap Y|$ and let $Z$ be a stable set of $G=G(X,Y)$ of size at most $k'-1$ of maximum weight. Since $Y\setminus X$ is an optimal $k'$-stable set and $X\setminus Y$ is an optimal ($k'-2$)-stable set, the concavity of the stable sets of $G$ implies that
\begin{align*}
2\OPT_{k+1} &\geq 2w((X\cap Y)\cup Z)=2w(X\cap Y) + 2w(Z)\\ &\geq 2w(X\cap Y) + w(X\setminus Y)+w(Y\setminus X)\\
& = w(X) + w(Y) = \OPT_k + \OPT_{k+2},  
\end{align*}
and therefore,  $\calI$ is concave.
\end{proof}

In the appendix, we show that strong $2$-exchange systems include a number of interesting special cases: In fact, $b$-matchings, strongly base orderable matchoids, and strongly base orderable matroid parity systems.  
This also implies that strong 2-exchange systems are in general hard to handle algorithmically: computing a maximum weight independent set of an SBO matroid parity system is $\NP$-hard and not even in $\OCNP$~\cite{soto2014mm}. However, it is possible to obtain a PTAS for weighted SBO matroid parity based on local search techniques~\cite{soto2014mm}.

\subsection{Good systems and $2$-extendible systems}\label{sec:2-extendible}
Mestre~\cite{mestre06greedy} defined $\mu$-extendible systems as a natural generalization of matroids on which the greedy algorithm yields a $1/\mu$-factor approximation. Formally\footnote{In the original definition by Mestre, an independence system $\calI$ is \emph{$\mu$-extendible} if for every $C\in \calI$ and $x\not \in C$ such that $C\cup \{x\} \in \calI$ and for every extension $D$ of $C$ there is a subset $Y\subseteq D\setminus C$ with $|Y|\leq \mu$ such that $D\setminus Y \cup \{x\}\in \calI$. Kakimura and Makino noted that both definitions are equivalent.}, an independence system $\calI$ is \emph{$\mu$-extendible} if for every $X, Y \in \calI$ and $y\in Y\setminus X$ there is $Z \subseteq  X \setminus Y$, $|Z|\leq \mu$ such that $(X \cup \{y\}) \setminus Z \in \calI$.
Kakimura and Makino~\cite{kakimura2013robust} showed that every $\mu$-extendible 
system admits a $1/\sqrt{\mu}$-robust (deterministic) solution which can be found by
the (not necessarily polynomial-time computable) squared weight algorithm. Matroid intersection, matchings, $b$-matchings, and matroid parity systems are all examples of $2$-extendible systems. This leads to the natural question how $2$-extendible systems are related to good systems. In fact, we show that good systems are a strict subclass of $2$-extendible systems.

\begin{lemma}\label{lem:good-extendible}
Every good system is $2$-extendible.
\end{lemma}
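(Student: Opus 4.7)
My plan is to prove the contrapositive: if $\calI$ is not 2-extendible, then some t-minor of $\calI$ is not bit-concave, contradicting the equivalence (iv)$\Leftrightarrow$(ii) of Theorem~\ref{thm:bit-concave}. So suppose $X, Y \in \calI$ and $y \in Y \setminus X$ witness the failure, i.e., $(X \cup \{y\}) \setminus Z \notin \calI$ for every $Z \subseteq X \setminus Y$ with $|Z| \leq 2$.

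First I would reduce to the case $X \cap Y = \emptyset$ by passing to the contraction $\calI' = \calI / (X \cap Y)$, which is a t-minor of $\calI$ and hence still good. The sets $X \setminus Y$ and $Y \setminus X$ are independent in $\calI'$, $y \in Y \setminus X$, and the 2-extendibility failure carries over verbatim. Relabeling, we may assume $X \cap Y = \emptyset$ and put $a := |X|$; the cases $a \leq 2$ are trivial because then $Z = X$ yields the independent set $\{y\}$, so we focus on $a \geq 3$. I then restrict further to the t-minor $\calJ$ obtained by deleting every element outside $X \cup \{y\}$. By hypothesis, every independent set of $\calJ$ containing $y$ has size at most $a - 2$; let $m$ denote this maximum size, so $1 \leq m \leq a - 2$.

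The heart of the argument is to evaluate $\OPT$ in $\calJ$ for the bit-function $w$ with $w(y) = 2$ and $w(e) = 1$ for every $e \in X$, at three consecutive cardinalities. At $k = m$, the best option containing $y$ has weight $2 + (m-1) = m+1$, beating the size-$m$ subset of $X$ of weight $m$, so $\OPT_m = m + 1$. At $k = m + 1$, no independent set containing $y$ exceeds weight $m + 1$ (its size is still capped at $m$), and a subset of $X$ of size $m + 1$ also has weight $m + 1$, giving $\OPT_{m+1} = m + 1$. At $k = m + 2$, since $m + 2 \leq a$, a subset of $X$ of size $m + 2$ yields $\OPT_{m+2} = m + 2$, which strictly beats any set with $y$. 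Hence
\[
\OPT_m + \OPT_{m+2} \;=\; 2m + 3 \;>\; 2m + 2 \;=\; 2\,\OPT_{m+1},
\]
contradicting bit-concavity of $\calJ$ and completing the proof.

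The one delicate choice is $w(y) = 2$: a larger power of two would let the with-$y$ branch dominate $\OPT_{m+2}$ as well and no violation would appear. Calibrating $w(y)$ so that the two branches tie at $k = m + 1$, while the slack $m \leq a - 2$ lets the pure-$X$ branch overtake at $k = m + 2$, is the only place where the problem exerts pressure; everything else is the routine combinatorics of a contraction, a deletion, and three $\OPT$ evaluations.
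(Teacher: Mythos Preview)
Your argument is correct. The core idea coincides with the paper's: pass to the contraction $\calI/(X\cap Y)$, put weight $2$ on $y$ and weight $1$ on the elements of $X\setminus Y$, and exploit bit-concavity. The execution differs in two pleasant ways. First, you \emph{delete} everything outside $X\cup\{y\}$, whereas the paper keeps those elements with negligible weight $2^{-M}$; your choice avoids the extra bookkeeping of arguing that the optimal set avoids the light elements. Second, you argue by contraposition and introduce the parameter $m=\max\{|S|:y\in S\in\calJ\}$, then read off the three values $\OPT_m,\OPT_{m+1},\OPT_{m+2}$ directly to exhibit a concavity violation; the paper instead works forward, using concavity of $\OPT$ on $\calI'$ to lower-bound $\OPT_{k-1}$ (with $k=|X\setminus Y|$) and then extracts from an optimal $(k-1)$-set the witness $Z$ for $2$-extendibility. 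Your route is arguably cleaner and isolates precisely where the failure lives; the paper's route has the mild advantage of being constructive, in that it explicitly produces the set $Z$ rather than deriving a contradiction.
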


\begin{proof}
Let $\calI$ be a good system, and let $X, Y\in \calI$ and $y \in Y\setminus X$. 
Let $k = |X \setminus Y|$ and assume that $k\geq 2$ as otherwise proving $2$-extendibility is trivial.
Consider the contraction $\calI' = \calI / (X \cap Y)$, which is bit-concave by Theorem~\ref{thm:bit-concave}.
Consider the bit-function  $w\colon E(\calI') \to \reals$ given by
$$w_e = \begin{cases} 1 &\text { if $e\in X \setminus Y$,}\\ 2&\text{ if $e=y$,}\\ 2^{-M} &\text{ otherwise}\\
\end{cases}
$$
for some large $M\geq k$.

Let $\lambda = \frac{1}{k-1}$. Using that $k-1= 1 \lambda + k(1-\lambda)$, that $\OPT_1=w_y=2$, that $w(X \setminus Y) = k \leq \OPT_k$, and the bit-concavity of $\calI'$, we conclude that
\begin{align*}
\OPT_{k-1} & \ \geq \ \lambda \OPT_1 + (1-\lambda) \OPT_k  \\
& \ \geq \ 2\frac{1}{k-1} + k \frac{k-2}{k-1} \ = \ \frac{1}{k-1} + k-1.
\end{align*}

Let $U \in \calI'_{k-1}$ with $w(U)=\OPT_{k-1}$. Since $M$ was chosen large enough, $U$ must be a subset of $(X \setminus Y) \cup \{y\}$ and so $w(U)$ is an integer larger than or equal to $k$. Since $|U|\leq k-1$, we conclude that $U$ contains $y$ and at least $k-2$ elements from $X \setminus Y$. Let $Z = (X \setminus Y) \setminus U$.
Note that $U \in \calI'$ implies $(X \cup \{y\}) \setminus Z = U \cup (X \cap Y) \in \calI$. This proves that $\calI$ is 2-extendible.
\end{proof}

\begin{lemma}\label{lem:good-extendible-counter-example}
There exist $2$-extendible systems that are not good systems.
\end{lemma}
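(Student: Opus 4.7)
The plan is to exhibit an explicit independence system $\calI$ that is $2$-extendible but fails to be bit-concave; by the equivalence $(i) \Leftrightarrow (iv)$ of Theorem~\ref{thm:bit-concave}, failure of bit-concavity is exactly failure of the good property, so this yields the lemma.

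First, I would look for a candidate outside the concave families identified in Section~\ref{sec:good-examples} (matroid intersections, $b$-matchings, and strong $2$-exchange systems). A promising target is matroid parity over a matroid that is \emph{not} strongly base orderable: general matroid parity is known to be $2$-extendible, while the concavity argument sketched in the appendix for SBO matroid parity relies crucially on the base-orderability bijection and breaks otherwise. A concrete try is therefore matroid parity on a small non-SBO matroid (for instance the graphic matroid $M(K_4)$ together with a carefully chosen pairing of its edges), or, failing that, an abstract system designed directly on a small ground set so that the $\OPT_k$ curve has an upward kink for one chosen bit-weighting.

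Second, I would verify $2$-extendibility from the definition: for every pair $X, Y \in \calI$ and every $y \in Y \setminus X$, exhibit an explicit $Z \subseteq X \setminus Y$ with $|Z| \leq 2$ such that $(X \cup \{y\}) \setminus Z \in \calI$. Since the ground set is small, this reduces to a finite case analysis, driven by the possible overlaps of $X$ and $Y$ on the elements responsible for infeasibility.

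Third, I would exhibit a bit-function $w \colon E \to \reals\nn$ (weights of the form $2^\ell$, $\ell \in \ints$) and a cardinality $k \in \nats$ with
\[
\OPT_k + \OPT_{k+2} \;>\; 2\,\OPT_{k+1},
\]
which by Theorem~\ref{thm:bit-concave} contradicts the good property. The main obstacle is the construction itself: $2$-extendibility already imposes a strong local exchange property that, for almost every naturally occurring system one might try first, is enough to force $\OPT_k$ to be concave for every bit-weighting. The counterexample must therefore exploit a genuinely global obstruction—most likely a matroid whose bases do not admit a strongly orderable bijection, so that the heaviest optimum set at cardinality $k+2$ is structurally incompatible with the best ones at cardinalities $k$ and $k+1$—that the local $2$-extendibility exchange cannot smooth out. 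Once such a configuration is pinned down, both verifications become routine.
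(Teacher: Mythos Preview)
Your outline correctly identifies the target---a $2$-extendible system that is not bit-concave---and the two verification steps, but it stops precisely where the actual content of the lemma lies: the construction. The concrete suggestion you make, matroid parity over $M(K_4)$, does not work as stated: $M(K_4)$ has rank $3$, so any matroid parity instance on it has at most $\lfloor 3/2 \rfloor = 1$ pair in an independent set, and a system of rank $1$ is trivially concave. One would need a substantially larger non-SBO matroid and a hand-tuned pairing, and even then it is not clear that the resulting system fails bit-concavity; the appendix argument for SBO parity uses strong base orderability, but its failure for general parity does not by itself produce a counterexample. Your fallback (``an abstract system designed directly on a small ground set'') is exactly what is needed, but you do not supply it.

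The paper takes that direct route: on a $6$-element ground set it declares four maximal independent sets $A=\{a_1,a_2\}$, $B=\{b_1,b_2,b_3,b_4\}$, $C=\{a_1,b_3,b_4\}$, $D=\{a_2,b_3,b_4\}$. The $2$-extendibility check collapses to the single nontrivial case $X=B$, $Y=A$, handled by $Z=\{b_3,b_4\}$. For the failure of goodness the paper uses characterization $(iv)$ of Theorem~\ref{thm:bit-concave} directly rather than your planned concavity inequality: with $w=2$ on $\{a_1,a_2,b_1\}$ and $w=1$ on $\{b_2,b_3,b_4\}$, the unique $w^*$-maximal set is $A$, but $w(A)=4<5=w(B)$. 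This is both shorter than exhibiting $\OPT_k+\OPT_{k+2}>2\OPT_{k+1}$ and avoids the search over $k$. The whole proof is a handful of lines once the four bases are written down; the difficulty you anticipated is resolved by abandoning the matroid-parity heuristic and simply engineering the bases so that one size-$2$ set beats everything lexicographically while a disjoint size-$4$ set beats it in total weight.
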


\begin{proof}
Let $A=\{a_1,a_2\}$ and $B=\{b_1,b_2,b_3,b_4\}$ be two disjoint sets. Define also $C=\{a_1,b_3,b_4\}$ and $D=\{a_2,b_3,b_4\}$. Consider the independence system $\calI^*$ with bases $\{ A, B, C, D\}$. We show that $\calI^*$ is $2$-extendible but not good.

Let $X$ and $Y$ be two bases of $\calI^*$. If  $|X\setminus Y|\leq 2$, the property of extendibility holds by setting $Z=X\setminus Y$. So suppose that $|X\setminus Y|\geq 3$. This can only happen if $X=B$ and $Y=A$. But in this case we can set $Z=\{b_3,b_4\}$ and so $J-Z + a_i \in \{C,D\} \subset \calI^*$. This proves that $\calI^*$ is 2-extendible.

To see that the system is not good, consider the bit-function $$w_e=\begin{cases} 2 &\text{ if $e \in \{a_1,a_2,b_1\},$}\\ 1 &\text{ if $e \in \{b_2,b_3,b_4\}.$}\end{cases}$$
The unique lexicographically optimal ($w^*$-maximal) independent set is $A$. However $A$ is not $w$-maximal, since $w(A)=4<5=w(B)$.
\end{proof}

\section{Stochastic optimization and connections to maximum priority matching}\label{sec:max-priority}
The goal of robust matchings, as considered in the preceding section, is to be as good as possible even in the worst-case. Alternatively, one can consider a \emph{stochastic} setting, in which we have some information about the distribution of the cardinality bound. This problem is known as \emph{maximum priority matching}\footnote{In fact, the original definition of maximum priority matching given by Hassin and Rubinstein~\cite{hassin2002robust} is different, but we show it to be equivalent at the end of this section.}~\cite{hassin2002robust}: Given a probability distribution $\mu \in \Delta(\nats)$ over the natural numbers, find a (deterministic) matching $M \in \calM$ maximizing
\begin{align*}
  \E_{k \sim \mu}[w(M_k)] = \sum_{k \in \nats} \mu_k w(M_k),
\end{align*}
\ie, the expected weight of the best $k$ elements when $k$ is drawn from the distribution $\mu$.
Hassin and Rubinstein~\cite{hassin2002robust} showed that the maximum priority matching problem is $\NP$-hard. They also observed that any $\alpha$-robust matching is an $\alpha$-approximate maximum priority matching. Hence their squared weights algorithms is a $1/\sqrt{2}$-approximation algorithm to the problem. 

In this section, we show that any $\alpha$-robust randomized matching contains an $\alpha$-approximation to maximum priority matching in its support.
In fact, this result again extends directly to priority maximization over independence systems and we will thus again replace $\calM$ by an independence system $\calI$.

\begin{theorem}\label{thm:max-priority}
  Let $\lambda \in \Delta(\calI)$ be the probability distribution corresponding to an $\alpha$-robust randomized independent set. Then for any $\mu \in \Delta(\nats)$ there is an $S \in \calI$ with $\lambda_S > 0$ such that
 \begin{align*}
  \E_{k \sim \mu}[w(S_k)] \geq \alpha \cdot \max_{S^* \in \calI} \, \E_{k \sim \mu}[w(S^*_k)].
\end{align*}
\end{theorem}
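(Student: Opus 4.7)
\medskip

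My plan is a straightforward averaging (probabilistic-method) argument. The double expectation of the payoff under $\lambda$ and $\mu$ is an upper bound on the best payoff in the support, and $\alpha$-robustness turns it into a lower bound of $\alpha$ times the optimum. More precisely, consider the quantity
\[
\E_{S \sim \lambda}\bigl[\E_{k \sim \mu}[w(S_k)]\bigr].
\]
The first step is to swap the two expectations by Fubini (both sums are nonnegative, so there is no issue), obtaining
\[
\E_{S \sim \lambda}\bigl[\E_{k \sim \mu}[w(S_k)]\bigr] \;=\; \E_{k \sim \mu}\bigl[\E_{S \sim \lambda}[w(S_k)]\bigr] \;\geq\; \E_{k \sim \mu}[\alpha\cdot \OPT_k] \;=\; \alpha\cdot \E_{k \sim \mu}[\OPT_k],
\]
where the inequality is precisely the definition of $\alpha$-robustness of $\lambda$ applied separately for every~$k$.

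Next I would compare $\OPT_k$ to the priority value of an arbitrary fixed $S^* \in \calI$. For any $S^* \in \calI$, the set $S^*_k$ is a subset of $S^*$ (hence independent by downward closure) and has cardinality at most $k$, so by definition of $\OPT_k$ we have $\OPT_k \geq w(S^*_k)$. Taking the $\mu$-expectation and then the maximum over $S^*$ gives
\[
\E_{k \sim \mu}[\OPT_k] \;\geq\; \max_{S^* \in \calI}\, \E_{k \sim \mu}[w(S^*_k)].
\]
Combining the two displays yields
\[
\E_{S \sim \lambda}\bigl[\E_{k \sim \mu}[w(S_k)]\bigr] \;\geq\; \alpha \cdot \max_{S^* \in \calI}\, \E_{k \sim \mu}[w(S^*_k)].
\]

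The conclusion is now an averaging argument: since the left-hand side is the expectation of $\E_{k \sim \mu}[w(S_k)]$ under $S \sim \lambda$, there must exist at least one $S$ with $\lambda_S > 0$ for which $\E_{k \sim \mu}[w(S_k)]$ meets this bound. That $S$ is the required $\alpha$-approximate priority-optimal independent set.

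There is no real obstacle; the only fiddly point is noting that downward closure of $\calI$ is needed to justify $\OPT_k \geq w(S^*_k)$, and that Fubini applies to the nonneg\-ative double sum over $\calI \times \nats$. Everything else is linearity of expectation followed by a pigeonhole on the support of $\lambda$.
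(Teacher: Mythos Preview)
Your proof is correct and, in essence, captures the same averaging idea as the paper, but you carry it out more directly. The paper instead sets up a primal/dual LP pair: the primal encodes Alice's problem of finding the most robust distribution over $\calI' := \{S : \lambda_S > 0\}$, and the dual encodes Bob's problem of choosing a worst-case distribution over cardinalities. It then observes that $\lambda$ is feasible for the primal with value $\alpha$, constructs from $\mu$ the reweighted distribution $y_k := \mu_k \OPT_k / \sum_i \mu_i \OPT_i$ as a feasible dual solution, and invokes weak LP duality to obtain an $S$ in the support with $\sum_k \frac{w(S_k)}{\OPT_k} y_k \geq \alpha$, which unwinds to the desired inequality.

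Your Fubini-plus-averaging argument is exactly what underlies that weak-duality step once the LP scaffolding is stripped away; you simply avoid the detour through $y$ by working with $\mu$ directly and using $\OPT_k \geq w(S^*_k)$ at the end rather than normalizing by $\OPT_k$ up front. What the paper's formulation buys is the explicit game-theoretic interpretation: it makes transparent that the dual separation problem is precisely maximum priority matching (used in the complexity discussion that follows the theorem) and motivates the generalization to arbitrary two-player zero-sum games in Remark~\ref{rem:best-response}. Your version, on the other hand, is shorter and requires no LP machinery at all.
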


\begin{proof}
Let $K := \{|S| \,:\, S \in \calI\}$. Note that we can assume~$\{k \;:\; \mu_k > 0\} \subseteq K$ without loss of generality. 
Furthermore, let $\calI' := \{S \in \calI \,:\, \lambda_S > 0\}$ and consider the following primal/dual pair of linear programs:
\begin{align*}
  \text{(P)}\qquad \max && \alpha &&\\
  \text{s.t.} && \sum_{S \in \calI'} \frac{w(S_k)}{\OPT_k} x_S & \geq \alpha & \forall k \in K \\
  && \sum_{S \in \calI'} x_S & = 1 & \\
  && x_S & \geq 0 & \forall S \in \calI'
\end{align*}
\begin{align*}
  \text{(D)}\qquad \min && \beta &&\\
  \text{s.t.} && \sum_{k \in K} \frac{w(S_k)}{\OPT_k} y_k & \geq \beta & \forall S \in \calI' \\
  && \sum_{k \in K} y_k & = 1 & \\
  && y_k & \geq 0 & \forall k \in K
\end{align*}

Note that setting $x_S = \lambda_S$ for $S \in \calI'$ yields a feasible solution to (P) of value $\alpha$. 
For $k \in K$, let $y_k := \frac{\mu_k \OPT_k}{\sum_{i \in K} \mu_i \OPT_i}$ and let $\beta := \min_{S \in \calI'}  \sum_{k \in K} \frac{w(S_k)}{\OPT_k} y_k$. Note that $(y, \beta)$ is a feasible solution to (D) and hence $\beta \geq \alpha$. Therefore, there is $S \in \calI'$ such that 
\begin{align*}
\sum_{k \in K} w(S_k) \mu_k \; & = \; \sum_{k \in K} \frac{w(S_k)}{\OPT_k} y_k \sum_{i \in K} \mu_i \OPT_i \\
 \; & \geq \; \alpha \sum_{k \in K} \mu_k \OPT_k \; \geq \; \alpha \cdot \max_{S^* \in \calI} \, \sum_{k \in K} \mu_k w(S^*_k).\qedhere
\end{align*}
\end{proof}

Combining Theorem~\ref{thm:max-priority} with our results from the previous section, we obtain a $1/\ln(4)$-approximation algorithm for priority maximization over bit-concave independence systems.

\begin{corollary}
  There is a $1/\ln(4)$-approximation algorithm for finding a maximum priority independent set in bit-concave independence systems. In particular, there is a $1/ln(4)$-approximation for maximum priority matching.
\end{corollary}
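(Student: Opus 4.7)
The plan is to combine the randomized-robustness algorithm of Corollary~\ref{cor:computation} with the existential guarantee of Theorem~\ref{thm:max-priority} into an explicit derandomization procedure. Since a bit-concave system is good by Theorem~\ref{thm:bit-concave}, applying the algorithm of the proof of Theorem~\ref{thm:ln4-robust} yields a $1/\ln(4)$-robust distribution $\lambda \in \Delta(\calI)$. Crucially, Corollary~\ref{cor:computation} shows that this distribution is supported on at most $|E|+1$ independent sets, one for each of the distinct bit-functions $w^{[x]}$ arising as $x$ ranges over $[0,1]$, and each such set can be computed by a single call to a maximum weight independent set oracle on $\calI$ (applied to $(w^{[x]})^*$, which is polynomially encoded).

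Next, I would feed the resulting distribution $\lambda$ together with the input distribution $\mu \in \Delta(\nats)$ into Theorem~\ref{thm:max-priority}. The theorem guarantees the existence of some $S \in \mathrm{supp}(\lambda)$ satisfying
\[
\E_{k \sim \mu}[w(S_k)] \ \geq \ \frac{1}{\ln(4)} \cdot \max_{S^* \in \calI} \, \E_{k \sim \mu}[w(S^*_k)].
\]
To make this constructive, the algorithm would simply enumerate the (polynomially many) sets $S^{[x_1]}, \dots, S^{[x_m]}$ in the support of $\lambda$, evaluate $\E_{k \sim \mu}[w(S^{[x_i]}_k)] = \sum_{k} \mu_k w(S^{[x_i]}_k)$ for each of them, and return the maximizer. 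Theorem~\ref{thm:max-priority} guarantees that the best of these values is at least $1/\ln(4)$ times the priority optimum.

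Two small points need to be checked to conclude polynomial runtime. First, the sum $\sum_{k} \mu_k w(S_k)$ needs to be computable efficiently; this is immediate as long as $\mu$ is given in a form where the relevant values $\mu_k$ for $k \in \{1,\dots,|E|\}$ are accessible (for $k > |E|$, $S_k = S$ is constant in $k$). Second, polynomial-time solvability of maximum weight independent set on $\calI$ is needed to run Corollary~\ref{cor:computation}; this is granted for matchings, $b$-matchings, and matroid intersection, which already covers the statement ``in particular, there is a $1/\ln(4)$-approximation for maximum priority matching.'' There is no substantive obstacle beyond carefully assembling these ingredients; the statement is essentially a packaging of Corollary~\ref{cor:computation} and Theorem~\ref{thm:max-priority}.
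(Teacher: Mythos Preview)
Your proposal is correct and matches the paper's approach exactly. The paper does not even spell out a proof for this corollary; it merely states ``Combining Theorem~\ref{thm:max-priority} with our results from the previous section, we obtain a $1/\ln(4)$-approximation algorithm\ldots'' --- you have correctly unpacked this into the explicit derandomization (enumerate the polynomially many sets in the support produced by Corollary~\ref{cor:computation}, evaluate $\E_{k\sim\mu}[w(S_k)]$ for each, and output the best), together with the appropriate caveats about needing a polynomial-time maximum-weight oracle for $\calI$ and access to $\mu$.
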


\begin{remark}\label{rem:best-response}
Note that the proof of Theorem~\ref{thm:max-priority} does not make any assumptions on $\calI$ and $K$ other than that they are finite sets.
Theorem~\ref{thm:max-priority} therefore generalizes to arbitrary two-player zero-sum games as follows. 
Consider a zero-sum game played by Alice and Bob, where Alice has the finite strategy space $A$ and Bob has the finite strategy space $B$. Let $u(a, b)$ be the pay-off when Alice plays $a \in A$ and Bob plays $b \in B$.
Let $\lambda \in \Delta(A)$ be a mixed strategy of Alice with $\E_{a \sim \lambda}[u(a, b)] \geq \alpha \max_{a^* \in A} u(a, b)$ for every $b \in B$ (one could call this an \emph{$\alpha$-robust mixed strategy} of Alice). Theorem~\ref{thm:max-priority} then guarantees that for any mixed strategy $\mu \in \Delta(B)$ of Bob, $\lambda$ contains a pure strategy $a$ in its support with $\E_{b \sim \mu} u(a, b) \geq \alpha \max_{a^* \in A} \E_{b \sim \mu} u(a, b)$, i.e., the pure strategy $a$ is an $\alpha$-approximate best response to the mixed strategy $\mu$. 
\end{remark}

\subsubsection*{Complexity of finding optimal mixed strategies for Alice and Bob}  
The proof of Theorem~\ref{thm:max-priority} reveals another interesting connection between robust randomized matchings and maximum priority matching. Observe that finding a randomized matching of maximum robustness, \ie, an optimal mixed strategy for Alice, is equivalent to solving the linear program (P) with $\calI' = \calI$. Also note that the dual program (D) asks for an optimal mixed strategy played by Bob. As the number of variables in (P) is exponential, one could hope to solve the separation problem of the dual (D) instead, which reduces to finding~$S \in \calI$ minimizing $\sum_{k \in K} \frac{w(S_k)}{\OPT_k} y_k$ for a given $y \in \Delta(K)$. 
In game-theoretic terms, this can be interpreted as finding an optimal deterministic response by Alice to a given mixed strategy played by Bob.
In the proof of Theorem~\ref{thm:max-priority} we showed that this problem corresponds to the maximum priority matching problem (when choosing~$y$ as described in the proof), which is $\NP$-hard~\cite{hassin2002robust}. 
While this hardness effectively prevents us from solving the dual separation problem, we remark that it does not imply that finding the optimal strategies is $\NP$-hard: the equivalence of optimization and separation does not come into effect here, as (D) only contains the variable $\beta$ in its objective function. 

\subsubsection*{Original formulation of maximum priority matching}
At the beginning of this section, we introduced the maximum priority matching problem as a stochastic optimization variant of robust matchings. The original definition~\cite{hassin2002robust} of the problem differs somewhat from this formulation: The input is a graph $G = (V, E)$, edge weights $w : E \rightarrow \reals\nn$, and \emph{priorities} $c_1, \dots, c_{|E|} \in \reals\nn$ with~\mbox{$c_1 \geq \dots \geq c_{|E|}$}. The task is to find a matching $M = \{e_1, \dots, e_{|M|}\}$ with $w_{e_1} \geq \dots \geq w_{e_{|M|}}$ maximizing~\mbox{$f(M) := \sum_{i = 1}^{|M|} c_i w_{e_i}$}. It is, however, easy to see that this formulation is equivalent to the one given at the beginning of the section:
\begin{itemize}
  \item Given $\mu \in \Delta(\nats)$, set $c_k := \sum_{i = k}^{\infty} \mu_k$. Then $f(M) = \sum_{k \in \nats} \mu_k w(M_k)$ for every $M \in \calM$.
  \item Given $c_1, \dots, c_{|E|}$, set $\mu_{|E|} := c_{|E|}/c_1$ and $\mu_k := (c_k - c_{k+1})/c_1$ for $k \in \{1, \dots, |E|-1\}$. Then $\mu \in \Delta(\nats)$ and $f(M) = c_1 \cdot \sum_{k \in \nats} \mu_k w(M_k)$ for every $M \in \calM$.
\end{itemize}

\section{Asymptotic robustness}\label{sec:asymptotic}

	We now turn our attention to a setting in which Alice has to choose a matching deterministically, but Bob's choice of $k$ is limited to large cardinalities. Given $K \in \nats$, a matching $M$ is \emph{$(\alpha, K)$-robust}, if $w(M_k) \geq \alpha\OPT_k$ for every $k \geq K$. Accordingly, we define $\alpha_K(G)$ to be the \emph{$K$-asymptotic robustness ratio} of a graph $G = (V, E)$, \ie, the largest value $\alpha$ such that for all weights $w : E \rightarrow \reals\nn$ there is an $(\alpha, K)$-robust matching. Finally, we define $\alpha_K = \inf_{G} \alpha_K(G)$, \ie, $\alpha_K$ denotes the smallest $K$-asymptotic robustness ratio over all graphs.
	
	In this section, we establish a close connection between the concepts of randomized robustness and asymptotic robustness by showing that both the upper and lower bound on the randomized robustness factor given in the previous section carry over to the asymptotic setting.

	\begin{theorem}\label{thm:asymptotic}
		For every $\eps > 0$ there is a $K \in \nats$ such that $\alpha_K \geq 1/\ln(4) - \eps$.
	\end{theorem}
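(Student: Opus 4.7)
The plan is to derandomize the $1/\ln(4)$-robust randomized matching from Theorem~\ref{thm:ln4-robust}, using the asymptotic slack of only needing the guarantee for $k \geq K$ to absorb the derandomization error. Two sub-problems must be solved: the distribution from Theorem~\ref{thm:ln4-robust} has support depending on $|E|$, so I first reduce it to a size depending only on $\eps$; then I extract a deterministic matching from this constant-support distribution.

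For the support reduction, I would modify the proof of Theorem~\ref{thm:ln4-robust} to sample $x$ uniformly from the finite grid $\{i/N : 0 \leq i < N\}$ for $N = \lceil c/\eps\rceil$, rather than from $U[0,1]$. Since $x \mapsto 2^{-x}$ is Lipschitz on $[0, 1]$, its Riemann sum on this grid satisfies $\frac{1}{N}\sum_{i=0}^{N-1}2^{-i/N} \geq 1/\ln(4) - O(1/N)$, giving a $(1/\ln(4) - \eps/2)$-robust distribution $\lambda$ supported on at most $r \leq N$ matchings $M_1, \ldots, M_r$. For the derandomization, I plan a greedy \emph{interleaving} construction: process the edges of $\bigcup_i M_i$ in decreasing weight order and include $e \in M_i$ in the output matching $M$ whenever $e$ is compatible with what has been chosen so far and the share of $M_i$-edges in $M$ lags behind $\lambda_i$. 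Since $\bigcup_i M_i$ has maximum degree at most $r$, at every weight threshold $t$ the resulting count $|\{e \in M : w_e \geq t\}|$ should agree with $\sum_i \lambda_i\,|\{e \in M_i : w_e \geq t\}|$ up to an additive $O(r)$ error caused by conflicts.

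Translating this additive edge-count error into a multiplicative loss on $w(M_k)$ via the layer-cake identity $w(M_k) = \int_0^\infty \min(k, |\{e \in M : w_e \geq t\}|)\,dt$ is the main obstacle: edge weights can span an arbitrary dynamic range, so a small additive discrepancy concentrated at a single heavy weight level can cause a large multiplicative loss. To address this, I plan to bucket the weights by powers of two, as in the proof of Theorem~\ref{thm:ln4-robust}, and analyze the error bucket by bucket, exploiting the bit-concave structure of $\OPT_k$ to argue that the per-bucket loss is absorbed into a factor $(1 - \eps/2)$ once $k \geq K = K(\eps)$. The resulting $K$ would then be polynomial in $1/\eps$ and independent of $G$, yielding $\alpha_K \geq 1/\ln(4) - \eps$ for every such $K$.
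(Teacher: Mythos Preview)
Your support-reduction step is sound and matches the paper's Lemma~\ref{lem:constant-support} in spirit: discretizing the random shift $x$ to a grid of size $N = O(1/\eps)$ indeed yields at most $N$ distinct bit-functions and the Riemann-sum estimate gives the claimed loss.

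The derandomization step, however, has a genuine gap. Your greedy interleaving does \emph{not} achieve the claimed additive $O(r)$ count discrepancy. Consider $r = 2$, $\lambda_1 = \lambda_2 = \tfrac12$, with $M_1$ consisting of $n$ vertex-disjoint edges of weight $100$ and $M_2$ of $n$ vertex-disjoint edges of weight $1$, and $M_1 \cap M_2 = \emptyset$ with no shared vertices. Your greedy processes all $M_1$-edges first; after including one of them the $M_1$-share is $100\%$, so it skips every remaining $M_1$-edge, then includes all of $M_2$. At threshold $t = 50$ the target count is $n/2$ but the greedy count is $1$, an error of order $n$, not $O(r)$. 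The failure mode is that ``share lags behind $\lambda_i$'' is a criterion about \emph{proportions of edges already chosen}, which is blind to the weight scale and will systematically under-select the heavy matching whenever weights are stratified. Note that this example has no conflicts at all, so the degree bound on $\bigcup_i M_i$ is irrelevant here; the defect is in the balancing rule itself.

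Even granting an additive count guarantee, you correctly identify the additive-to-multiplicative conversion as the main obstacle, but your proposed fix via bucketing and bit-concavity is too vague to evaluate: bit-concavity controls $\OPT_k$, not $w(M_k)$ for your constructed $M$, and you give no mechanism linking the two. The paper's route is quite different. It merges two matchings at a time (Lemma~\ref{lem:asymptotic_merge}) by first performing three structural transformations on $M \Delta M'$ (Lemma~\ref{lem:transformed}) that bound each component's cardinality by a constant and its weight by $O(1/K)\cdot\min\{w(\bar{M}_K), w(\bar{M}'_K)\}$; only after this does a \emph{randomized} per-component selection succeed, with Hoeffding's inequality and a union bound over all $k \geq K$ showing that some realization is simultaneously good everywhere. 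The weight-bounding transformation is precisely what absorbs the heavy-edge obstruction you flagged, and it has no analogue in your greedy scheme.
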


	The main idea in the proof of Theorem~\ref{thm:asymptotic} is to transform an $\alpha$-robust randomized matching into a deterministic $(\alpha - \eps, K)$-robust matching.
	Our argument, however, requires that the number of matchings in the support of the distribution is bounded by a constant. 
	We show that this property can be attained for the $1/\ln(4)$-robust distributions constructed in Section~\ref{sec:randomized} with only a small loss in the robustness factor; see Lemma~\ref{lem:constant-support}. If a similar sparsification exists for arbitrary randomized matchings, this would imply the following stronger result, which we leave as a conjecture:
	\begin{conjecture}\label{con:asymptotic}
	  $\lim_{K \rightarrow \infty} \alpha_{K} = \inf_{G} \alpha^*(G)$
	\end{conjecture}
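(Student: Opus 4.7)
The plan splits the conjecture into the two inequalities, exploiting that $\alpha_K$ is monotone nondecreasing in $K$.

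For the direction $\lim_{K \to \infty} \alpha_K \leq \inf_G \alpha^*(G)$, the strategy is to generalize the construction used earlier in this section to prove $\alpha_K \leq (1+1/\sqrt{2})/2$: take many disjoint copies of an extremal graph. Given $\eps > 0$, fix $G_0$ with $\alpha^*(G_0) \leq \inf_G \alpha^*(G) + \eps/2$ and let $\tilde{G}_n$ be the disjoint union of $n$ copies of $G_0$. The key properties are (i) matchings are concave (see Section~\ref{sec:good-examples}), so $\OPT^{\tilde{G}_n}_{nk^*} = n \cdot \OPT^{G_0}_{k^*}$ for every $k^*$, and (ii) any deterministic matching $\tilde{M}$ in $\tilde{G}_n$ decomposes as $\bigsqcup_i M^{(i)}$ and thereby induces an empirical distribution $\lambda$ on matchings of $G_0$. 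Using Bob's minimax-optimal mixed strategy $y^*$ from the game in $G_0$ (the dual witness exhibited in the proof of Theorem~\ref{thm:max-priority}), I would define a strategy $\tilde{y}$ in $\tilde{G}_n$ putting mass $y^*_{k^*}$ on $k = nk^*$ and argue that $\E_{\tilde{y}}[w(\tilde{M}_k)/\OPT^{\tilde{G}_n}_k] \leq \alpha^*(G_0) + o(1)$ as $n \to \infty$. The main technical wrinkle is that the top $nk^*$ edges of $\tilde{M}$ can exceed $n \cdot \E_\lambda[w(M_{k^*})]$, since Alice may draw more top edges from some copies than from others; closing this gap requires an LP-duality argument using the concavity of the map $j \mapsto w(M_j)$ to show that such reallocation does not strictly help Alice beyond $\alpha^*(G_0)$ in the limit.

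For the direction $\lim_{K \to \infty} \alpha_K \geq \inf_G \alpha^*(G)$, the plan invokes the hypothesis preceding the conjecture: namely, for every $\eps > 0$ there is a constant $s = s(\eps)$ such that any graph $G$ admits a randomized matching of support size at most $s$ that is $(\alpha^*(G) - \eps/2)$-robust. Given such sparsification, apply the derandomization procedure underlying Theorem~\ref{thm:asymptotic}, namely Lemma~\ref{lem:constant-support} together with the rounding step, to the sparsified distribution. The output is a deterministic matching that is $(\alpha^*(G) - \eps, K)$-robust, where $K$ depends only on $\eps$ (through $s(\eps)$ and the rounding loss) and not on $G$. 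Since $\alpha^*(G) \geq \inf_{G'} \alpha^*(G')$ holds for every $G$, taking infimum over $G$ yields $\alpha_K \geq \inf_G \alpha^*(G) - \eps$ for this $K$.

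The principal obstacle is the sparsification step in the lower bound direction. Lemma~\ref{lem:constant-support} sparsifies only the specific $1/\ln(4)$-robust distribution from Theorem~\ref{thm:ln4-robust}, exploiting the bit-function weight structure; for distributions of higher robustness no analogous structure is apparent. The natural attempt is LP rounding on the program (P) from Section~\ref{sec:max-priority}: a basic feasible solution has support size at most the number of tight cardinality constraints plus one, and discretizing the cardinality space into $O(\log(1/\eps)/\eps)$ buckets (for instance, those cardinalities at which $\OPT_k$ jumps by at least a factor $1 + \eps$) would collapse the support to a function of $\eps$ alone at the cost of a small multiplicative loss in the robustness factor. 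Making such a discretization yield robustness uniformly close to $\alpha^*(G)$ across all graphs is, in my view, the crux of the conjecture and the reason it was left open.
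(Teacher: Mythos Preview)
The statement you are addressing is labeled a \emph{conjecture} in the paper and is explicitly left open; there is no proof in the paper to compare against. The sentence immediately preceding it reads: ``If a similar sparsification exists for arbitrary randomized matchings, this would imply the following stronger result, which we leave as a conjecture.'' Your proposal is therefore not in competition with a proof in the paper but rather an outline of what such a proof would require, and on that score your diagnosis agrees with the paper's.

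For the lower-bound direction, your plan is exactly the conditional route the paper has in mind: assuming a uniform (graph-independent) bound on the support of a near-optimal randomized matching, one feeds the sparse distribution into the merging procedure of Lemma~\ref{lem:asymptotic_merge} (not Lemma~\ref{lem:constant-support}, which is specific to the $1/\ln(4)$ distribution) and obtains a deterministic $(\alpha^*(G)-\eps,K)$-robust matching with $K$ depending only on $\eps$. You correctly isolate the missing ingredient---uniform sparsification of arbitrary near-optimal distributions---and you are right that the basic-feasible-solution idea for (P) does not immediately give this, since the number of tight cardinality constraints can grow with the graph and it is unclear that bucketing cardinalities preserves robustness to within $o(1)$ uniformly. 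This is precisely the obstruction the paper alludes to; you have not removed it, and you say so.

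For the upper-bound direction, your many-copies construction is the natural generalization of Lemma~\ref{rem:asymptotic_bad_example}, but the step you call a ``technical wrinkle'' is a real gap, not a routine detail. When $\tilde{M}$ uses different matchings in different copies, the set $\tilde{M}_{nk^*}$ need not decompose as $\bigsqcup_i (M^{(i)})_{k^*}$, so the identity $w(\tilde{M}_{nk^*}) = n\,\E_\lambda[w(M_{k^*})]$ can fail by a nonvanishing amount. Your appeal to concavity of $j \mapsto w(M_j)$ and to Bob's dual strategy $y^*$ is suggestive, but you have not shown that the surplus Alice gains at some $k^*$ is necessarily offset by a deficit at another $k^*$ in the support of $y^*$; without that, Bob's lifted strategy does not certify $\alpha_K(\tilde{G}_n) \leq \alpha^*(G_0)+o(1)$. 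The paper does not claim this direction either, so both halves remain open.
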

	
\subsection{An upper bound on asymptotic robustness}
Before we discuss the proof of Theorem~\ref{thm:asymptotic}, we remark that our worst-case example for the robustness of randomized matchings can be translated to the asymptotic setting by introducing $K$ copies of the worst-case instance.

	\begin{lemma}\label{rem:asymptotic_bad_example}
		For every $K \in \nats$ there exists a graph $G$ such that $\alpha_K(G) \leq (1 + 1/\sqrt{2})/2$.
	\end{lemma}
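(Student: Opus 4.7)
The plan is to let $G$ be the disjoint union of $K$ copies of the weighted path from Figure~\ref{fig:tight} (each with edge weights $1, \sqrt{2}, 1$), and to show that already at the two cardinalities $k = K$ and $k = 2K$ no matching in $G$ can be more than $(1 + 1/\sqrt{2})/2$-robust.

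First I would perform a WLOG reduction. In each copy, the only inclusion-maximal matchings are the single central edge (weight $\sqrt{2}$) and the pair of outer edges (total weight $2$). Since adding an edge to $M$ can only increase $w(M_k)$ for every $k$, we may restrict attention to matchings $M$ that in every copy take one of these two options. Such an $M$ is determined up to symmetry by a single integer $a \in \{0, \dots, K\}$ counting the copies where $M$ takes the central edge: then $M$ consists of $a$ edges of weight $\sqrt{2}$ and $2(K - a)$ edges of weight $1$.

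Next I would compute the relevant optima and payoffs. The marginal value of the first edge used in a copy is $\sqrt{2}$, while the marginal value of the second is $2 - \sqrt{2} < \sqrt{2}$. Hence $\OPT_K = K\sqrt{2}$ (optimally use the central edge in every copy) and $\OPT_{2K} = 2K$ (optimally use both outer edges in every copy). For $M$ itself, its $K$ heaviest edges are the $a$ central edges together with $K - a$ of the outer edges, so $w(M_K) = a\sqrt{2} + (K - a)$; and since $|M| = 2K - a \leq 2K$ we have $w(M_{2K}) = w(M) = a\sqrt{2} + 2(K - a)$.

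The last step is the algebraic identity
\begin{align*}
\frac{w(M_K)}{\OPT_K} + \frac{w(M_{2K})}{\OPT_{2K}} \;=\; 1 + \frac{1}{\sqrt{2}}\,,
\end{align*}
a short direct calculation: substituting $t = a/K$ gives the two summands $t + (1-t)/\sqrt{2}$ and $t/\sqrt{2} + (1-t)$, which sum to $1 + 1/\sqrt{2}$ independently of $t$. The minimum of the two ratios is therefore at most $(1 + 1/\sqrt{2})/2$, so no matching is $(\alpha, K)$-robust for $\alpha > (1 + 1/\sqrt{2})/2$. I do not foresee any real obstacle; the only slightly nontrivial step is the dominance argument that lets us ignore non-maximal matchings within each copy, but this is just monotonicity of $w(M_k)$ under adding edges.
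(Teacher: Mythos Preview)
Your proof is correct and follows essentially the same approach as the paper: both take $K$ disjoint copies of the Figure~\ref{fig:tight} path, reduce to maximal matchings parametrized by the number $a$ (the paper's $K'$) of copies using the central edge, and compare the ratios at $k=K$ and $k=2K$. Your final step---observing that the two ratios always sum to $1+1/\sqrt{2}$, hence their minimum is at most half that---is a slightly cleaner way to package the same algebra the paper carries out as an explicit max--min.
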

	\begin{proof}
Consider the graph consisting of $K$ disjoint copies of the graph described in Figure~\ref{fig:tight}. Let $M$ be any maximal matching in this graph. Observe that in every copy of the original graph, $M$ either consists of the central edge with weight $\sqrt{2}$ or of the two outer edge with weight $1$ each. Let $K'$ denote the number of copies in which $M$ contains the central edge. Then 
\begin{align*}
& \frac{w(M_K)}{\OPT_K} = \frac{\sqrt{2}K'  + K - K'}{\sqrt{2}K}\\ \text{ and } \quad & \frac{w(M_{2K})}{\OPT_{2K}} = \frac{\sqrt{2}K'  + 2(K - K')}{2K}.
\end{align*}
Therefore the $K$-asymptotic robustness of $M$ is bounded by 
\begin{align*}
\max_{K' \in [0, K]} \min\left\{\frac{\sqrt{2}K'  + K - K'}{\sqrt{2}K},\ \frac{\sqrt{2}K'  + 2(K - K')}{2K}\right\},
\end{align*}
which is equal to $(1 + 1/\sqrt{2})/2$.
\end{proof}

	\subsection{Proof of Theorem~\ref{thm:asymptotic}}
	In order to prove Theorem~\ref{thm:asymptotic}, we first observe that the randomized matching produced by our algorithm in Section~\ref{sec:randomized} can be transformed into a distribution containing only a constant number of matchings in its support with only a small loss in its robustness factor.
	
	\begin{lemma}\label{lem:constant-support}
		For every $\eps > 0$, there is an $L \in \nats$ such that for every graph $G = (V, E)$ and every weight function $w : E \rightarrow \reals\nn$ there is an $(1/\ln(4)-\eps)$-robust randomized matching $\lambda \in \Delta(\mathcal{M})$  with support of size $|\{M \in \mathcal{M} \, : \, \lambda(M) > 0\}| \leq L$.
	\end{lemma}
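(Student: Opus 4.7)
My plan is to modify the algorithm from the proof of Theorem~\ref{thm:ln4-robust} by discretizing the random variable $x$. Instead of sampling $x$ uniformly from $[0,1]$, I would sample $x$ uniformly from the finite set $X_L := \{0, 1/L, \ldots, (L-1)/L\}$ and return the same matching $S^{[x]}$ as in the proof of Theorem~\ref{thm:ln4-robust}, where $L$ depends only on $\eps$. Since only $L$ values of $x$ are used, the resulting randomized matching has support of size at most $L$, regardless of $G$ and $w$. It then remains to show that this discretization only costs a small amount in the robustness factor.

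The analysis follows the proof of Theorem~\ref{thm:ln4-robust} step by step. Using the pointwise identity $2^x w^{[x]}_e = w_e \cdot 2^{-\{\ell_e - x\}}$ (with $\ell_e := \log_2(w_e)$ and $\{\cdot\}$ denoting the fractional part), the same chain of inequalities yields, for a $w$-optimum $k$-independent set $S^*$,
\[\E\bigl[w(S^{[x]}_k)\bigr] \;\geq\; \E\bigl[2^x w^{[x]}(S^*)\bigr] \;=\; \sum_{e \in S^*} w_e \cdot R_L(\ell_e),\]
where $R_L(\ell) := \frac{1}{L} \sum_{j=0}^{L-1} 2^{-\{\ell - j/L\}}$. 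The problem thus reduces to a uniform lower bound $R_L(\ell) \geq 1/\ln(4) - \eps$ for every real $\ell$, with $L$ depending only on $\eps$.

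To establish this bound I would use periodicity: since $u \mapsto 2^{-\{u\}}$ is $1$-periodic, $\int_\ell^{\ell+1} 2^{-\{u\}}\,du = \int_0^1 2^{-u}\,du = 1/\ln(4)$. The sum $R_L(\ell)$ is the Riemann sum for this integral with step $1/L$ starting at $\ell$. On each of the $L$ subintervals of length $1/L$ on which $2^{-\{u\}}$ is smooth, a mean-value estimate bounds the Riemann sum error by $O(1/L^2)$; the at most one subinterval per period that straddles an integer (where $2^{-\{u\}}$ jumps from $1/2$ to $1$) contributes an error of at most $1/L$. Summing gives $|R_L(\ell) - 1/\ln(4)| \leq c/L$ for some absolute constant $c$, uniformly in $\ell$. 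Choosing $L := \lceil c/\eps \rceil$ then yields $\E[w(S^{[x]}_k)] \geq (1/\ln(4) - \eps)\OPT_k$ for every $k$, finishing the proof.

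The main technical obstacle I foresee is ensuring the Riemann-sum error bound is \emph{uniform} in $\ell$, since the integrand has a jump discontinuity at every integer and its Riemann sum's quality could a priori depend on how the grid $X_L$ aligns with these jumps. The key observation is that each period $[\ell, \ell+1)$ contains exactly one integer, so at most one of the $L$ subintervals is ``bad'', and its crude contribution of $O(1/L)$ is already absorbed by the overall error budget. Hence the bound on $R_L(\ell)$ is independent of $\ell$, which is exactly what is needed for the argument to work for all edges simultaneously.
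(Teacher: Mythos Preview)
Your proof is correct, but it takes a different route from the paper's. The paper \emph{discretizes the weights} rather than the randomness: it pre-rounds each $\ell_e = \log_2 w_e$ down to the nearest multiple of $1/L$, obtaining new weights $\tilde{w}$ with $\tilde{w}_e \geq 2^{-1/L} w_e \geq (1-\eps) w_e$, and then applies the continuous algorithm of Theorem~\ref{thm:ln4-robust} to $\tilde{w}$ as a black box. Because every fractional part $\{\tilde{\ell}_e\}$ is now a multiple of $1/L$, the bit-function $\tilde{w}^{[x]}$ can take at most $L$ distinct values as $x$ ranges over $[0,1]$, bounding the support. Your approach instead keeps $w$ untouched and \emph{discretizes the random variable} $x$, then controls the loss by a uniform Riemann-sum estimate for the periodic function $u \mapsto 2^{-\{u\}}$. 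The two arguments are in a sense dual: the paper's is more modular (it reuses Theorem~\ref{thm:ln4-robust} verbatim and the loss factor $2^{-1/L}$ is immediate), whereas yours is more direct and, as a bonus, yields a uniform distribution over the $L$ matchings. Your handling of the uniformity in $\ell$ is correct; in fact the sum $R_L(\ell)$ can be evaluated in closed form as a geometric series $\frac{2^{-\theta_0}}{2L(1-2^{-1/L})}$ with $\theta_0 \in [0,1/L)$, from which the bound $R_L(\ell) \geq \frac{2^{-1/L}}{2L(1-2^{-1/L})} \to 1/\ln(4)$ follows without needing to worry about the jump discontinuity at all.
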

	
\begin{proof}
Let $L \in \nats$ be sufficiently large such that \mbox{$2^{-1/L} \geq (1 - \eps)$}.
For $e \in E$, define \mbox{$\tilde{\ell}_e = \lfloor L \log_2 w_e \rfloor / L$} and $\tilde{w}_e = 2^{\tilde{\ell}_e}$. Note that $w_e \geq \tilde{w}_e \geq 2^{-1/L} w_e$. Therefore, any $\alpha$-robust randomized matching with respect to $\tilde{w}$ is $(1 - \eps)\alpha$-robust with respect to $w$. Now use the algorithm in the proof of Theorem~\ref{thm:ln4-robust} to construct a randomized $1/\ln(4)$-robust matching \mbox{$\lambda \in \Delta(\calM)$} with respect to $\tilde{w}$. Note that the rounded weight of edge~$e$ in the algorithm is~$\tilde{w}^{[x]}_e=2^{\tilde{\ell}_e}$ if~$x \leq \tilde{\ell}_e - \lfloor \tilde{\ell}_e \rfloor$ and~\mbox{$\tilde{w}^{[x]}_e=2^{\tilde{\ell}_e-1}$} otherwise. As all values $\tilde{\ell}_e$ are integer multiples of $1/L$, the algorithm considers at most $L$ different bit functions and thus~\mbox{$|\{M \in \mathcal{M} \, : \, \lambda(M) > 0\}| \leq L$}.
\end{proof}
	
	We then make use of the following lemma which describes how to merge two matchings from the support of the distribution according to their probability coefficients in such a way that the asymptotic robustness of the resulting matching is close to the randomized robustness factor of the convex combination of the two matchings. As we can assume the support of the distribution to be constant, we only need to repeat this procedure a constant number of times to construct a single matching that for large cardinalities is as good as the original randomized matching. In each step, the weight only decreases by a factor of $1 - \eps$, thus the total decrease is $(1 - \eps)^L$ for the chosen constant $L$. We can therefore bound the total loss by an arbitrarily small constant, proving Theorem~\ref{thm:asymptotic}.

	\begin{lemma}\label{lem:asymptotic_merge}
		For every $\varepsilon > 0$, there is a $K \in \mathbb{N}$ such that for every graph $G = (V, E)$, every weight function $w : E \rightarrow \mathbb{Z}_+$, every pair of matchings $M, M' \subseteq E$ in $G$, and every $\mu \in [0, 1]$, there is a matching $M^*$ such that $$w(M^*_k) \geq (1 - \varepsilon) \big(\mu w(M_k) + (1-\mu) w(M'_k)\big)$$ for all $k \geq K$.
	\end{lemma}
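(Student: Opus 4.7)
The plan is to build $M^*$ as a submatching of $H:=M\cup M'$. Since $M$ and $M'$ are both matchings, $H$ has maximum degree $2$ and decomposes into isolated common edges (from $M\cap M'$), alternating paths, and even alternating cycles. Restricting to $M^*\subseteq H$ is without loss of generality: edges outside $M\cup M'$ do not appear on the right-hand side of the target inequality, so they cannot help (and could block useful edges).

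I would first reduce the continuum of constraints ``for all $k\geq K$'' to finitely many. Since weights are nonnegative integers, $w(N_k)=\sum_{j\geq 1}\min(k,f_N(j))$ for any matching $N$, with $f_N(j):=|\{e\in N:w_e\geq j\}|$. The elementary inequality $\min(k,X-\Delta)\geq\min(k,X)-\Delta$ together with the concavity of $x\mapsto\min(k,x)$ give
\[
w(M^*_k)\;\geq\;\mu w(M_k)+(1-\mu)w(M'_k)\;-\;\sum_{t\geq 1}\Delta(t)
\]
whenever $f_{M^*}(t)\geq \mu f_M(t)+(1-\mu)f_{M'}(t)-\Delta(t)$ holds at every $t$. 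A logarithmic bucketing of weights, $W_\ell:=\{e:w_e\in[(1+\eta)^\ell,(1+\eta)^{\ell+1})\}$ with $\eta=\Theta(\varepsilon)$, plus trimming of buckets contributing at most an $\varepsilon$-fraction of $w(M)+w(M')$, reduces the thresholds where $\Delta$ needs to be nonzero to $T=O_\varepsilon(1)$ representative values $t_1>\cdots>t_T$.

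The combinatorial core is to find $M^*\subseteq H$ approximately meeting these $T$ threshold counts. Because $H$ is a disjoint union of paths and even cycles, it is bipartite, and its matching polytope equals $\{x\in[0,1]^{E(H)}:x(\delta(v))\leq 1\text{ for all }v\}$. The fractional matching $x=\mu\chi_M+(1-\mu)\chi_{M'}$ lies in this polytope and satisfies the $T$ threshold equalities $x(\{e:w_e\geq t_\ell\})=\mu f_M(t_\ell)+(1-\mu)f_{M'}(t_\ell)$ exactly. Applying iterated LP-rounding on this polytope intersected with the $T$ equalities yields an integral matching $M^*$ with only $O(T)$ edges differing from $x$ (a standard basic-feasible-solution argument: at most $T$ additional tight constraints can leave at most $O(T)$ fractional variables in a vertex of the restricted polytope), hence $\Delta(t_\ell)=O(T)$ for each $\ell$ and $\sum_{\ell}\Delta(t_\ell)=O(T^2)=O_\varepsilon(1)$.

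The main obstacle is converting this additive loss into a multiplicative $(1-\varepsilon)$-guarantee. When $\mu w(M_k)+(1-\mu)w(M'_k)\geq \sum_\ell \Delta(t_\ell)/\varepsilon$ for all $k\geq K$, the bound follows immediately. In the complementary regime, where the target is bounded by an $\varepsilon$-dependent constant, one may assume without loss of generality that weights are positive integers (zero-weight edges contribute nothing and may be deleted as in Theorem~\ref{thm:ln4-robust}); this forces $\min(|M|,|M'|)$ to be bounded, so the relevant combinatorial instance is of bounded size and can be handled by a direct construction (for instance $M^*\in\{M,M'\}$ chosen by enumeration, which satisfies the target up to the $(1-\varepsilon)$-factor by the small constant sizes involved). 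Choosing $K$ large enough that, for $k\geq K$, either the target exceeds $\sum_\ell\Delta(t_\ell)/\varepsilon$ or the instance falls into the bounded regime, completes the proof.
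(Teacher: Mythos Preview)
Your approach differs substantially from the paper's. There, $M^*$ is constructed probabilistically: three combinatorial transformations are first applied to $M$ and $M'$ so that every component of the resulting symmetric difference has cardinality $O_\varepsilon(1)$ and weight at most $\tfrac{O_\varepsilon(1)}{K}\min\{w(\bar M_K),w(\bar M'_K)\}$; then in each component one independently keeps the $\bar M$-side with probability~$\mu$, and Hoeffding's inequality shows that with positive probability all constraints $k\ge K$ hold simultaneously. No threshold reduction or LP rounding is used.

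Your argument has a genuine gap in the threshold-reduction step. The inequality you derive,
\[
w(M^*_k)\ \ge\ \mu\,w(M_k)+(1-\mu)\,w(M'_k)-\sum_{t\ge 1}\Delta(t),
\]
sums over \emph{all} integer thresholds $t$; after rounding weights to bucket values this becomes a sum over your buckets weighted by the \emph{gaps} between consecutive bucket values, and trimming a bucket by its contribution to $w(M)+w(M')$ does not make its term in this weighted sum small. Concretely, let $e_0\in M\setminus M'$ and $e_0'\in M'\setminus M$ be conflicting edges of weight~$W$, and let $M\cap M'$ consist of $N$ pairwise disjoint edges of weight~$1$, where $N\gg W/\varepsilon$ and $W\gg K$. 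The heavy bucket contributes only $2W<\varepsilon'\bigl(w(M)+w(M')\bigr)$ and is trimmed; the single surviving threshold constraint is $|M^*|\approx N+1$, which the integral matching $M^*=M\cap M'$ satisfies up to~$O(1)$. But then $w(M^*_K)=K$ while the target equals $W+K-1$, so the ratio is at most $K/W\to 0$. Since $W$ and $N$ are instance parameters, no choice of $K=K(\varepsilon)$ rescues this. This is precisely the difficulty the paper's three transformations are designed to overcome: they force every component's weight to be $O(1/K)$ times the target, which is what makes an additive-to-multiplicative conversion possible.

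The complementary-regime paragraph does not cover the example above (the target there is large), and is in any event incomplete: a small target does not bound $|M|$ when $\mu$ is near~$0$ or~$1$, and even with $|M|,|M'|$ bounded the weights remain arbitrary, so ``$M^*\in\{M,M'\}$ by enumeration'' is not an argument that one of them meets the bound for every $k\ge K$.
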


\subsubsection*{Outline of the proof}
We first give a short outline of the proof of Lemma~\ref{lem:asymptotic_merge}. We refer to an inclusionwise maximal cycle or path in the symmetric difference $M \Delta M'$ as a \emph{component}. The main idea of the proof is to construct $M^*$ by selecting from each component independently at random either all edges of $M$ or all edges of $M'$ according to the coefficient in the convex combination. However, in order to ensure that this procedure yields a matching satisfying the requirements of the lemma with positive probability, we need to  perform a series of transformations on $M$ and $M'$ that simplify the structure of the symmetric difference. 
	  
 The first transformation ensures that neighboring edges in $M \Delta M'$ do not differ too much in their weight by erasing edges of considerably smaller weight than their neighbors. The second ensures that the cardinality of each component of $M \Delta M'$ is not too large; this is done by partitioning each component into subpaths of constant lengths and erasing the minimum weight edge from each of them. The third and final transformation ensures that no component contributes more than a small fraction to the total weight of $M_K$ or $M'_K$, respectively. This last step is based on a consequence of the two preceding transformations: If a component $C$ contributes a too large fraction to the total weight of either $M_K$ or $M'_K$, then $C \subseteq M_K \cup M'_K$, \ie, all edges in $C$ are among the top $K$ edges of their respective matching. Therefore, we can swap the component in one of the matchings, thus eliminating it from the symmetric difference.
	   Note that performing these transformations can decrease the weights of $M_k$ and $M'_k$ for some $k \in \mathbb{N}$, but we can make sure that the decrease is bounded by a small fraction of the original weight. 
	   
	   The result of the above transformations is formalized in Lemma~\ref{lem:transformed} below. In particular, $M \Delta M'$ now consists of a large number of components all of which are small both in cardinality and weight. We can therefore apply Hoeffding's inequality~\cite{hoeffding1963probability} to ensure that with positive probability the constructed matching $M^*$ contains at most $(1 + \delta)k$ edges from $M_k \cup M'_k$ for every $k \geq K$, while the weight of $M^* \cap (M_k \cup M'_k)$ is at least $(1 - \delta)$ times the corresponding weight of the convex combination. Therefore, for this choice of $M^*$, there is a subset of $k$ edges of $M^*$ of weight $\frac{1-\delta}{1+\delta}(\mu w(M_k) + (1-\mu)w(M'_k))$ for every $k \geq K$, completing the proof.

\subsubsection*{Proof details.}
In the following, we describe the above steps in detail.
Choose $\delta > 0$ such that $\frac{(1 - \delta)^2}{1+\delta} \geq 1 - \eps$. Let $D_1 := 18/\delta + 3$, $D_2 := D_1^{D_1 + 4}$, and $D_3 := 6/\delta + 2$ (the important property is that these are sufficiently large numbers only depending on $\delta$). Throughout the proof, we consider a fixed $K$. We will determine the exact choice of $K$ at the end of the proof, but its value will only depend on $\delta$, which in turn only depends on $\eps$. The following lemma holds for any choice of $K$ and formalizes the outcome of the transformations outlined above:
	  
\begin{lemma}\label{lem:transformed}
  There are matchings $\bar{M}, \bar{M}' \subseteq M \cup M'$ such that
  \begin{itemize}
    \item $w(\bar{M}_k) \geq (1 - \delta) w(M_k)$ and $w(\bar{M}'_k) \geq (1 - \delta) w(M'_k)$ for all $k \geq K$,
    \item $|C| \leq D_1$ and $w(C) \leq \frac{D_2}{K} \min \{w(\bar{M}_K), w(\bar{M}'_K)\}$ for every component $C$ of $\bar{M} \Delta \bar{M}'$,
    \item $D_3 \cdot w(\bar{M}_k) \geq w(\bar{M}'_k) \geq w(\bar{M}_k) / D_3$ for all $k \geq K$.
  \end{itemize}
\end{lemma}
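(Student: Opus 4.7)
My plan is to construct $\bar{M}$ and $\bar{M}'$ via three successive transformations on $M$ and $M'$, each targeting one of the bulleted properties, while always maintaining $\bar{M}, \bar{M}' \subseteq M \cup M'$ as matchings. Each transformation will be calibrated so that the top-$k$ weight of either matching drops by at most a $\delta/3$-fraction for every $k \geq K$, so that by a union bound the first bullet $w(\bar{M}_k) \geq (1-\delta) w(M_k)$ (and its analog for $\bar{M}'$) will hold at the end.

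In Step~1, for a constant $R_1$ of the same order as $D_1$, I walk along each path/cycle in $M \Delta M'$ and delete every edge $e$ whose weight is strictly below $1/R_1$ times that of at least one of its neighbors in the same component; after Step~1, adjacent surviving edges in every component of $\bar{M}\Delta\bar{M}'$ have weights within factor $R_1$ of each other. In Step~2, I partition each surviving component into consecutive blocks of $D_1-1$ edges and delete the minimum-weight edge of each block, so that components end up with at most $D_1$ edges, and combined with Step~1, all edges of a single component lie within factor $R_1^{D_1}$ of each other. The weight loss in both steps is bounded by charging each deleted edge to a strictly heavier retained neighbor in the same component (at rate $1/R_1$ in Step~1, at rate $1/(D_1-1)$ in Step~2); exploiting the alternating structure of the symmetric difference ensures that a deleted top-$k$ edge is always dominated by a retained edge that lies in the top-$k$ of its own matching, and with $D_1 = 18/\delta + 3$ the top-$k$ loss stays below $\delta/3$ per step.

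In Step~3, if some component $C$ of the current symmetric difference still violates $w(C) \leq \frac{D_2}{K}\min\{w(\bar{M}_K), w(\bar{M}'_K)\}$, then by the weight-ratio and cardinality bounds from Steps~1 and~2 (together with the choice $D_2 = D_1^{D_1+4}$, which dominates $D_1 R_1^{D_1}$) every edge of $C$ must be strictly heavier than any edge outside the top-$K$ of its matching, and therefore every edge of $C$ already lies in the top-$K$. I then swap $C$ along its alternating structure, replacing $C \cap \bar{M}$ in $\bar{M}$ by $C \cap \bar{M}'$ (or symmetrically for $\bar{M}'$), which removes $C$ from the symmetric difference while preserving the matching property (since $C$'s internal vertices are covered by both matchings) and only redistributes top-$K$ edges. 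Iterating yields the second bullet. The third bullet will be obtained by a closely related rebalancing swap: whenever some $k \geq K$ satisfies $w(\bar{M}_k) > D_3 w(\bar{M}'_k)$, a localized swap of alternating pieces between $\bar{M}$ and $\bar{M}'$ restores the $D_3$-factor balance, and the choice $D_3 = 6/\delta + 2$ is large enough to absorb the additional weight loss into the $\delta$-budget of the first bullet.

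The hard part will be that the first bullet is a \emph{top-$k$} guarantee rather than a total-weight one: since $w(M)$ can be arbitrarily larger than $w(M_K)$, a plain argument bounding only the total weight loss is far too weak. Every pruning step must therefore charge each deleted top-$k$ edge to a strictly \emph{heavier} retained edge, which then automatically lies in the top-$k$ of its own matching; this is exactly what the alternating structure of $M \Delta M'$ provides, since neighbors in a component belong to opposite matchings. The most delicate bookkeeping will appear in Step~3, where one must combine the weight-ratio and cardinality bounds to conclude that heavy components lie entirely inside the top-$K$, and where one must verify that the sequence of swap operations terminates (via a monotone potential such as the residual symmetric-difference weight) without re-violating the earlier bullets.
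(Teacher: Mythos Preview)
Your Step~1 has a genuine gap: pure deletion of light edges can catastrophically deplete one of the two matchings, violating the first bullet. Consider a single alternating path whose edge weights decrease geometrically by a factor just above $R_1$ along the path. Every edge except the heaviest has a neighbor more than $R_1$ times its weight, so your rule deletes all but the heaviest edge; since consecutive edges alternate between $M$ and $M'$, this wipes out one of the two matchings entirely on that component, so $w(\bar{M}'_k)=0$ there. Your charging argument breaks precisely here: a deleted $M'$-edge is dominated only by its $M$-neighbor, but there need not be any \emph{retained $M'$-edge} that dominates it, so the top-$k$ loss of $M'$ cannot be bounded. The sentence ``a deleted top-$k$ edge is always dominated by a retained edge that lies in the top-$k$ of its own matching'' is therefore false as stated.

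The paper's first transformation avoids this by processing edges of $M\Delta M'$ in non-increasing weight order and, crucially, \emph{swapping} rather than only deleting: when a heavy edge $e\in M$ satisfies $w_e\geq w_{e'}+w_{e''}$ for both $M'$-neighbors $e',e''$, it replaces $\{e',e''\}$ by $e$ inside $M'$, which never decreases $w(M'_k)$. Only when one neighbor $e''$ is already heavy ($w_{e'}+w_{e''}>w_e$) does it delete the light neighbor $e'$, and then $e''$---an edge of the \emph{same} matching $M'$---is guaranteed to survive and to dominate $e'$ by a factor $1/\delta'$. This swap is also what yields the third bullet for free: after the transformation every surviving edge has a neighbor in the \emph{other} matching of weight at least $\frac{\delta'}{1+\delta'}$ times its own, which immediately gives the $D_3$-balance $w(\bar{M}'_k)\geq w(\bar{M}_k)/D_3$ for every $k$. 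No separate ``rebalancing swap'' is needed, and your proposed one is not well-defined---after heavy components have been swapped out of the symmetric difference in Step~3 there may be no alternating pieces left to move, yet the imbalance created by your Step~1 would persist.
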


We first prove the lemma and then show how to use it to construct $M^*$.
	  
\subsubsection*{Initial transformations (Proof of Lemma~\ref{lem:transformed})}
We describe how to modify $M$ and $M'$ step by step so that the resulting matchings fulfill the properties claimed by the lemma.
Throughout the proof let $\delta' := \delta/3$.

We first perform the following operation on all edges in $M \Delta M'$ in non-increasing order of their weight, starting with the heaviest: Let $e$ be the current edge (assume $e \in M \setminus M'$; the case $e \in M' \setminus M$ is analogous). Let $e', e'' \in M'$ be the neighbors of $e$ in $M \Delta M'$ (if either of them does not exist, temporally introduce an edge of weight $0$ to some artificial vertex). Without loss of generality, we assume $w_{e'} \leq w_{e''}$. 
		\begin{enumerate}
			\item If $w_{e} \geq w_{e'} + w_{e''}$, replace $M'$ by $M' \setminus \{e', e''\} \cup \{e\}$.
			\item If $w_{e'} + w_{e''} > w_{e}$ but $w_{e'} \leq \frac{\delta'}{1 + \delta'} w_{e}$, then remove $e'$ from $M'$.
		\end{enumerate}
		Note that operations of the first type do not decrease $w(M'_k)$ for any $k$. Furthermore, if an operation of the second type is performed, then $w_{e''} \geq (1 - \frac{\delta'}{1 + \delta'}) w_{e} \geq \frac{1}{\delta'} w_{e'}$. As the edges of $M$ are processed in order of non-increasing weight, $e''$ will also not be removed from $M'$. Therefore $w(M'_k)$ does not decrease by more than a factor of $1 - \delta'$ in total. Finally, observe that after the transformation any neighboring pair of edges $e, e' \in M \Delta M'$ fulfills $w_{e'} > \frac{\delta'}{1 + \delta'} w_{e}$.
		
		The second transformation considers any component $C$ of $M \Delta M'$ of cardinality $|C| > 6/\delta' + 3$. It partitions $C$ into $\ell + 1$ subpaths $C_0, \dots, C_\ell$ such that $|C_i| = \lceil 2/\delta' \rceil$ for $i > 0$ and $|C_0| \leq \lceil 2 / \delta' \rceil$. For $i$ from $1$ to $\ell$, remove the minimum weight edge of $C_i$ from the corresponding matching. Note that for any $k \in \mathbb{N}$, this transformation decreases $w(M_k)$ and $w(M'_k)$ by a factor of at most $1 - \delta'$ as for any edge removed from either matching, $\lceil 1/ \delta' \rceil - 1$ edges of at least the same weight remain in the same matching. Furthermore, after the transformation is performed, any component of $M \Delta M'$ has length at most $3 \lceil 2 / \delta' \rceil \leq 6 / \delta' + 3 = D_1$ (in the worst case $C_0$ remained connected with two neighboring subpaths).
		
		Before we describe the third transformation, we establish some useful consequences of the first two transformations. To this end, we define $$\beta := \left(\frac{\delta'}{1 + \delta'}\right)^{D_1} \quad \text{ and } \quad \gamma := \frac{\beta}{D_1 D_3}.$$
				\begin{lemma}\label{lem:iterations_weight}
		  For every $k \in \nats$ it holds that $D_3 \cdot w(M_k) \geq w(M'_k) \geq w(M_k) / D_3$.
		\end{lemma}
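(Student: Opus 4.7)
By the symmetric form of the claim, it suffices to prove $w(M'_k) \geq w(M_k)/D_3$ for every $k \in \nats$; the companion inequality $w(M'_k) \leq D_3 \cdot w(M_k)$ follows by swapping the roles of $M$ and $M'$. Notice the algebraic identity
\[
\frac{\delta'}{2(1+\delta')} \;=\; \frac{1}{2(1/\delta'+1)} \;=\; \frac{1}{2/\delta'+2} \;=\; \frac{1}{6/\delta + 2} \;=\; \frac{1}{D_3},
\]
which suggests the route: $1/D_3$ factors as $1/2$ (a $2$-to-$1$ mapping loss) times $\delta'/(1+\delta')$ (the first transformation's guarantee on neighbouring weights in $M\Delta M'$). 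The plan is therefore to construct a mapping $\phi \colon M \to M'$ that is at most $2$-to-$1$ and satisfies $w_{\phi(e)} \geq \frac{\delta'}{1+\delta'}\,w_e$ for every $e\in M$.

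Once such a $\phi$ is in hand, the conclusion follows quickly. Fix $k\in \nats$ and set $T := \phi(M_k) \subseteq M'$. Then $|T|\leq |M_k|\leq k$ and, since each $f\in T$ has at most two preimages,
\[
w(T) \;\geq\; \frac{1}{2}\sum_{e\in M_k} w_{\phi(e)} \;\geq\; \frac{\delta'}{2(1+\delta')}\,w(M_k) \;=\; w(M_k)/D_3.
\]
A standard swap argument (match each $e\in T\setminus M'_k$ with an at-least-as-heavy $f\in M'_k\setminus T$, which is possible because $|M'_k\setminus T|\geq |T\setminus M'_k|$) then gives $w(M'_k) \geq w(T) \geq w(M_k)/D_3$.

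For the construction of $\phi$, set $\phi(e)=e$ on the diagonal $M\cap M'$. For $e\in M\setminus M'$ whose component $C$ in the \emph{current} $M\Delta M'$ has $|C|\geq 2$, the alternation between $M\setminus M'$ and $M'\setminus M$ within $C$ guarantees at least one neighbour $f\in C\cap(M'\setminus M)$. The first transformation's property provides $w_f\geq \frac{\delta'}{1+\delta'}w_e$, and because each $f\in M'\setminus M$ has at most two neighbours in its component, assigning $\phi(e)$ to any such $f$ keeps the map at most $2$-to-$1$.

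The main obstacle is the case where $e\in M\setminus M'$ forms a singleton component in the current $M\Delta M'$, which can arise because the second transformation may delete both of $e$'s original neighbours in $M\Delta M'$ (each of them a minimum-weight edge of some subpath). I would handle this by going back to the partition used in the second transformation: any such isolated $e$ must lie at the boundary of some subpath $C_i$, both of whose adjacent deleted neighbours were minima of their respective subpaths. Each such subpath has length $\lceil 2/\delta'\rceil$, so at least one other edge of $M'\setminus M$ in the same subpath survives the deletion; by the first transformation's neighbour inequality its weight is at least that of the deleted minimum and hence at least $\frac{\delta'}{1+\delta'}w_e$. Assigning $\phi(e)$ among these surviving edges and carefully distributing the at most two boundary singletons per subpath (so that no surviving edge ends up with more than two preimages in total) preserves both the weight bound and the at-most-$2$-to-$1$ property. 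This bookkeeping around subpath boundaries is the trickiest part of the proof, but the algebra continues to work because $D_3=6/\delta+2$ is precisely tuned so that the $1/2$ loss from the $2$-to-$1$ mapping combines with the $\delta'/(1+\delta')$ neighbour factor to yield $1/D_3$.
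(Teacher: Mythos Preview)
Your approach—building an at-most-$2$-to-$1$ map $\phi\colon M\to M'$ that sends $e\in M\cap M'$ to itself and $e\in M\setminus M'$ to a neighbour in $M'\setminus M$, then combining the factor $\tfrac12$ from the $2$-to-$1$ count with the factor $\tfrac{\delta'}{1+\delta'}=\tfrac{2}{D_3}$ from the first transformation's adjacency guarantee—is exactly the paper's argument. The paper's entire proof is two sentences, refers only to the first transformation, and does not discuss the second transformation at all.

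Your extra worry about singleton components created by the second transformation therefore goes beyond what the paper does. The fix you sketch (routing an isolated boundary edge to a surviving $M'$-edge in one of the two adjacent subpaths) is plausible, but you have not actually established the critical claim that the $2$-to-$1$ bound survives: those surviving $M'$-edges already serve as $\phi$-targets for their own $M$-neighbours inside the subpath, so absorbing additional boundary preimages requires a real counting argument, not just the phrase ``carefully distributing.'' If you want to make this part rigorous, the much cleaner route is the one the paper implicitly takes: prove the inequality for the matchings immediately after the first transformation (where the neighbour map is available), and then note that the second transformation changes each of $w(M_k)$ and $w(M'_k)$ by at most a factor $1-\delta'$; this preserves the two-sided bound up to a factor $1/(1-\delta')$, which is harmless since the exact value of $D_3$ plays no role downstream.
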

		
		\begin{proof}
			After the first transformation, any edge $e$ has a neighbor in $M \Delta M'$ with weight at least $\delta'/(1 + \delta') w_{e} = \frac{2}{D_3} \cdot w_e$. The lemma follows from the fact that at most two edges share the same neighbor.
	  \end{proof}
		
		\begin{lemma}\label{lem:component_weight}
			Let $C$ be a component of $M \Delta M'$. Then $\min_{e \in C} w_{e} \geq \beta \cdot \max_{e \in C} w_{e}$.
		\end{lemma}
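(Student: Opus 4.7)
The plan is to chain together the two structural invariants established by the preceding transformations. Recall that after the first transformation, the text above argues that for any neighboring pair of edges $e, e' \in M \Delta M'$ (say with $w_e \geq w_{e'}$), we have $w_{e'} > \frac{\delta'}{1+\delta'} w_e$. Equivalently, the ratio of the larger to the smaller weight across any neighboring pair in $M \Delta M'$ is strictly less than $\frac{1 + \delta'}{\delta'}$. After the second transformation, every component $C$ of $M \Delta M'$ has cardinality at most $D_1$. Crucially, the second transformation only removes edges, so the neighbors of any surviving edge form a subset of its original neighbors; hence the weight-ratio invariant from the first transformation is preserved.

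Given this, I would fix a component $C$ of $M \Delta M'$ (after both transformations) and pick edges $e_{\max}, e_{\min} \in C$ achieving $\max_{e \in C} w_e$ and $\min_{e \in C} w_e$, respectively. Since $C$ is a path or cycle of at most $D_1$ edges, there exists a subpath in $C$ from $e_{\max}$ to $e_{\min}$ traversing at most $D_1 - 1$ consecutive neighboring pairs. Along each such step, the weight can shrink by a factor of at most $\frac{1+\delta'}{\delta'}$ (from the first-transformation invariant applied at each consecutive pair, regardless of the direction of decrease).

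Multiplying these bounds over the path yields
\[
\frac{\max_{e \in C} w_e}{\min_{e \in C} w_e} \;\leq\; \left(\frac{1+\delta'}{\delta'}\right)^{D_1 - 1} \;\leq\; \left(\frac{1+\delta'}{\delta'}\right)^{D_1} \;=\; \frac{1}{\beta},
\]
which rearranges to the claimed inequality $\min_{e \in C} w_e \geq \beta \cdot \max_{e \in C} w_e$.

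The main conceptual point — and the only place where care is needed — is verifying that the neighbor-weight invariant from the first transformation indeed survives the second transformation. This is straightforward because the second transformation only deletes edges (splitting paths), so no pair that was not already a neighboring pair becomes one; therefore the bound $w_{e'} > \frac{\delta'}{1+\delta'} w_e$ holds a fortiori for every neighboring pair in the updated $M \Delta M'$. The rest is a routine telescoping argument and does not require any further structural properties of matchings.
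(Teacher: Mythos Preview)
Your argument is correct and follows essentially the same route as the paper's proof: both combine the bounded-length invariant from the second transformation with the neighbor-weight ratio from the first, telescoping along a path between the heaviest and lightest edges. Your extra remark that the second transformation only deletes edges (and hence cannot create new neighboring pairs) makes explicit a point the paper leaves implicit, but otherwise the proofs coincide.
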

		
		\begin{proof}
		  After the second transformation, any component has length at most $D_1$. Therefore, the length of a path from the maximum weight edge to the minimum weight edge in $C$ is bounded by the same number. By the first transformation, with every edge along that path, the weight can only decrease by a factor of $\delta'/(1 + \delta')$. Therefore the total decrease is bounded by the factor~$\left(\delta'/(1 + \delta')\right)^{D_1} = \beta$.
    \end{proof}
		
		\begin{lemma}\label{lem:heavy_components}
		  If $C$ is a component with $w(C \cap M_K) > \frac{1}{\gamma K} w(M_K)$ or $w(C \cap M'_K) > \frac{1}{\gamma K} w(M'_K)$, then $C \subseteq M_K \cup M'_K$.
		\end{lemma}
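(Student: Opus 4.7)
My plan is to prove the contrapositive: assume there exists some $e \in C \setminus (M_K \cup M'_K)$, and derive that both $w(C \cap M_K) \leq w(M_K)/(\gamma K)$ and $w(C \cap M'_K) \leq w(M'_K)/(\gamma K)$, contradicting the hypothesis. Since $C \subseteq M \cup M'$, I may assume without loss of generality that $e \in M$; the case $e \in M'$ is completely symmetric.

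The starting point is a lower bound on $w(M_K)$ in terms of $w_e$. Because $e \notin M_K$, necessarily $|M| \geq K$ (otherwise $M_K = M$ would contain $e$), and every one of the $K$ heaviest edges of $M$ has weight at least $w_e$. Hence $w(M_K) \geq K w_e$, i.e., $w_e \leq w(M_K)/K$.

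Next I would invoke Lemma \ref{lem:component_weight} to control the entire component: the heaviest edge of $C$ satisfies $w_{\max}(C) \leq w_e/\beta$, and combining this with the bound $|C| \leq D_1$ supplied by the second transformation yields $w(C) \leq D_1 \, w_{\max}(C) \leq D_1 w(M_K)/(K\beta)$. The first desired inequality then drops out of the definition $\gamma = \beta/(D_1 D_3)$, with room to spare since $D_3 \geq 1$. For the second inequality I would feed the same chain through Lemma \ref{lem:iterations_weight}, substituting $w(M_K) \leq D_3 \, w(M'_K)$ to obtain exactly $w(C \cap M'_K) \leq w(M'_K)/(\gamma K)$.

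I do not anticipate any real obstacle: the lemma is essentially bookkeeping sitting on top of Lemmas \ref{lem:component_weight} and \ref{lem:iterations_weight}, and the only mild subtlety is ruling out the degenerate case $|M| < K$, which is immediately incompatible with $e \in M \setminus M_K$ (the symmetric case handles the corresponding issue for $M'$).
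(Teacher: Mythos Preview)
Your proposal is correct and follows essentially the same approach as the paper: assume an edge $e\in C\setminus(M_K\cup M'_K)$, use $e\notin M_K$ (resp.\ $e\notin M'_K$) to bound $w(M_K)\geq K w_e$, then combine $|C|\leq D_1$ with Lemma~\ref{lem:component_weight} and Lemma~\ref{lem:iterations_weight} to contradict the hypothesis via the definition of $\gamma$. The only cosmetic difference is that the paper fixes which inequality in the hypothesis is assumed and then case-splits on whether $e\in M$ or $e\in M'$, whereas you fix the matching containing $e$ and derive that both inequalities fail; the underlying computation is identical.
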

		
		\begin{proof}
			We prove the lemma for the case $w(C \cap M_K) > \frac{1}{\gamma K} w(M_K)$. The case for $M'_K$ follows by symmetry. We define $w^* := \min_{e \in C} w_{e}$.
			By contradiction assume there is an edge $e \in C \setminus (M_K \cup M'_K)$. If $e \in M \setminus M_K$, then $M_K$ contains $K$ edges of weight at least $w_{e} \geq w^*$ and therefore $w(M_K) \geq Kw^*$ in this case. If $e \in M' \setminus M'_K$, then $M'_K$ contains $K$ edges of weight at least $w_{e} \geq w^*$. Therefore, $w(M_K) \geq w(M'_K) / D_3 \geq \frac{K}{D_3}w^*$ by Lemma~\ref{lem:iterations_weight}. We deduce that
			\[\frac{D_1}{\beta} w^* \ \geq \ w(C \cap M'_K) \ > \ \frac{1}{\gamma K} w(M_K) \ \geq \ \frac{1}{\gamma D_3} w^*\]
			where the first inequality follows from the fact that every component contains at most $D_1$ edges and that the maximum weight edge in $C$ has weight at most $w^*/\beta$ by Lemma~\ref{lem:component_weight}. The above inequality implies that 
		$\gamma > \frac{\beta}{D_1 D_3}$, a contradiction to the definition of~$\gamma$.
\end{proof}

    Our third and final transformation will eliminate components from the symmetric difference that contribute a too large fraction to the weight of $M_K$ or $M'_K$.     
		The transformation proceeds as long as there is a component $C$ with $w(C \cap M_K) > \frac{1}{\gamma K} w(M_K)$ or $w(C \cap M'_K) > \frac{1}{\gamma K} w(M'_K)$ and performs the following operation: If $w(C \cap M) \geq w(C \cap M')$ then replace $M'$ by $M' \Delta C$. Otherwise, replace $M$ by $M \Delta C$. Note that by doing so, the component $C$ is removed from $M \Delta M'$.
		
		We show that, in total, this final transformation decreases $w(M_k)$ and $w(M'_k)$ for $k \geq K$ by at most a factor of $1 - 2\gamma$. 
		Consider the iteration of the transformation in which it operates on a component $C$. If $w(C \cap M) \geq w(C \cap M')$, then $M$ is not modified at all. Hence consider the case  $w(C \cap M) < w(C \cap M')$.	Note that, before the transformation, by Lemma~\ref{lem:heavy_components}, $C \cap M_k = C \cap M$ and $C \cap M'_k = C \cap M'$ for all $k \geq K$ and that further $|C \cap M'| \leq |C \cap M| + 1$. Therefore $|M_k \Delta C| \leq k + 1$. Thus, the operation decreases $w(M_k)$ by at most the weight of its minimum weight edge. As there are at most $2 \gamma K$ components with $w(C \cap M_K) > \frac{1}{\gamma K} w(M_K)$ or $w(C \cap M'_K) > \frac{1}{\gamma K} w(M'_K)$, the weight of $M_k$ decreases by at most $ \frac{2 \gamma K}{k} w(M_k) \leq 2 \gamma w(M_k)$. Analogously, $w(M'_k)$ decreases by at most $2 \gamma w(M'_k)$.

		Finally, to conclude the proof of Lemma~\ref{lem:transformed}, we observe that the total decrease incurred by the transformations to $w(M_k)$ or $w(M'_k)$ for any $k \geq K$ is not worse than a factor~$(1 - \delta')^2 (1 - 2\gamma) \geq (1 - \delta')^3 \geq 1 - \delta$. Furthermore, we note that Lemma~\ref{lem:heavy_components} implies that the heaviest edge appearing in $M \Delta M'$ after the third transformation has a weight of at most $\frac{1}{\gamma K} \max \{w(M_K), w(M'_K)\} \leq \frac{D_3}{\gamma K} \min \{w(M_K), w(M'_K)\}$. Hence no component can have a weight larger than $\frac{D_1 D_3}{\gamma K} \cdot \min \{w(M_K), w(M'_K)\} \leq \frac{D_2}{K} \min \{w(M_K), w(M'_K)\}$ (note that $D_3 \leq D_1$ and $\gamma \geq D_1^{-(D_1 + 2)}$).

\subsubsection*{Construction of $M^*$ (Proof of Lemma~\ref{lem:asymptotic_merge})}
		
In the following, let $\bar{M}$ and $\bar{M}'$ the matchings obtained by applying Lemma~\ref{lem:transformed} to $M$ and $M'$.
Let $\mathcal{C}$ be the set of components in $\bar{M} \Delta \bar{M}'$.
We construct a random matching $M^*$ by choosing independently for each component $C \in \mathcal{C}$ the edges of $C \cap \bar{M}$ with probability $\mu$ and the edges of $C \cap \bar{M}'$ with probability $1 - \mu$. We further add to $M^*$ all edges in $\bar{M} \cap \bar{M}'$. We will show that $M^*$ simultaneously fulfills all requirements of Lemma~\ref{lem:asymptotic_merge} with positive probability, concluding the proof. To this end, we will use Hoeffding's inequality.

\begin{theorem}[Hoeffding~(1963)~\cite{hoeffding1963probability}]\label{thm:hoeffding}
  Let $X_1, \dots, X_n$ be random variables with $X_i \in [a_i, b_i]$ almost surely for every $i \in \{1, \dots, n\}$. Define $X := \sum_{i=1}^{n} X_i$ and let $t \geq 0$. Then 
    $$\operatorname{Pr}\big(X - \mathbb{E}[X] \,\geq\, t\big) \ \leq \ \exp\left(-\frac{2 t^2}{\sum_{i=1}^{n}(b_i - a_i)^2}\right).$$
\end{theorem}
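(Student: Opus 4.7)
The standard route is the exponential moment (Chernoff) method combined with a convexity bound on the moment generating function of each $X_i$. The plan is, first, to introduce a parameter $s > 0$ and apply Markov's inequality to the random variable $e^{s(X - \E[X])}$:
\[
\operatorname{Pr}\bigl(X - \E[X] \geq t\bigr) \ = \ \operatorname{Pr}\bigl(e^{s(X - \E[X])} \geq e^{st}\bigr) \ \leq \ e^{-st} \cdot \E\bigl[e^{s(X-\E[X])}\bigr].
\]
Since the $X_i$ are independent, the moment generating function factorizes as $\E[e^{s(X-\E[X])}] = \prod_{i=1}^n \E[e^{sY_i}]$, where $Y_i := X_i - \E[X_i]$ is a zero-mean random variable supported in an interval of length $b_i - a_i$. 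This reduces the task to controlling the MGF of a single bounded, zero-mean random variable.

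The heart of the argument is Hoeffding's lemma: for any zero-mean random variable $Y$ bounded in $[\alpha, \beta]$ with $\alpha \leq 0 \leq \beta$, one has $\E[e^{sY}] \leq \exp(s^2(\beta-\alpha)^2/8)$. I would prove it by using convexity of the exponential to bound $e^{sy}$ on $[\alpha, \beta]$ by the chord through the endpoints; taking expectations and invoking $\E[Y] = 0$ yields $\E[e^{sY}] \leq \tfrac{\beta}{\beta - \alpha} e^{s\alpha} - \tfrac{\alpha}{\beta - \alpha} e^{s\beta}$. Letting $\phi(s)$ denote the logarithm of this upper bound, one checks $\phi(0) = \phi'(0) = 0$, and---this is the main technical obstacle---$\phi''(s) \leq (\beta - \alpha)^2 / 4$ for all $s$. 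This last bound follows from the observation that $\phi''(s)$ equals the variance of a two-point random variable supported in $[\alpha, \beta]$, and the variance of any random variable with range of length $L$ is at most $L^2/4$ (Popoviciu's inequality). A Taylor expansion with Lagrange remainder then gives $\phi(s) \leq s^2(\beta - \alpha)^2 / 8$, proving the lemma.

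Combining the ingredients produces
\[
\operatorname{Pr}\bigl(X - \E[X] \geq t\bigr) \ \leq \ \exp\!\left(-st + \frac{s^2}{8} \sum_{i=1}^n (b_i - a_i)^2\right)
\]
for every $s > 0$. The final step is to optimize the right-hand side in $s$: a simple calculus computation shows that the minimizer is $s^{\ast} = 4t / \sum_i (b_i - a_i)^2$, and plugging this back yields exactly $\exp\bigl(-2 t^2 / \sum_i (b_i - a_i)^2\bigr)$, which is the claimed bound.
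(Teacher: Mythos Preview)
The paper does not supply its own proof of this theorem; it is quoted as a classical result and attributed to Hoeffding~\cite{hoeffding1963probability}, so there is nothing to compare against. Your argument is the standard Chernoff--Hoeffding proof and is correct.

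One remark: the theorem as stated in the paper omits the hypothesis that the $X_i$ are independent, yet your proof (rightly) uses independence to factorize the moment generating function. Without independence the inequality is false in general, so what you are proving is the correct version of Hoeffding's inequality rather than the literal statement printed in the paper; this is also the version actually needed in the paper's application, where the per-component choices are made independently.
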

		
		We first ensure that $M^*$ contains not considerably more than $k$ edges from $\bar{M}_k \cup \bar{M}'_k$ for any $k \geq K$. To this end, we define $n_{C, k} := |C \cap \bar{M}_k|$ and $n'_{C, k} := |C \cap \bar{M}'_k|$ for all $C \in \mathcal{C}$ and $k \geq K$. We further define the random variable $N^*_k := |M^* \cap (\bar{M}_k \cup \bar{M}'_k)|$ for every $k \geq K$.
		Observe  that $\mathbb{E}[N^*_k] = k$ and thus by Hoeffding's inequality
		\begin{align*}
		  \operatorname{Pr}\big(N^*_k \geq (1 + \delta)k\big) \ \leq \ &  \exp\left(-\frac{2\delta^2k^2}{\sum_{C \in \mathcal{C}} (n_{C, k} - n_{C, k})^2}\right) \\
		  \leq \ & \exp\left(-\frac{\delta^2k}{D_1}\right)
		\end{align*}
		for every $k \geq K$. Here, the second inequality follows from the fact that $\sum_{C \in \mathcal{C}} |n_{C, k} - n_{C, k}| \leq 2k$ and $|C| \leq D_1$ for all $C \in \mathcal{C}$. Defining $r := \exp\left(-\frac{\delta^2}{D_1}\right)$, we deduce that the probability that $N^*_k \geq (1 + \delta) k$ for at least one $k \geq K$ is at most
		\[\sum_{k \geq K}  \operatorname{Pr}\big(N^*_k \geq (1 + \delta)k\big) \ \leq \ \sum_{k \geq K} r^k \ \leq \ \frac{r^K}{1 - r}.\]
		
		We now turn our attention to the weights. For $k \in \mathbb{N}$, we define $W_k := \mu w(\bar{M}_k) + (1 - \mu) w(\bar{M}'_k)$ and for $C \in \mathcal{C}$ we define $w_{C,k} := w(C \cap \bar{M}_k)$ and $w'_{C,k} := w(C \cap \bar{M}'_k)$. 
		Observe that
		\begin{align*}
		  |w_{C,k} - w'_{C,k}| \ \leq \ \frac{D_2}{K} \min \{w(\bar{M}_K), \, w(\bar{M}'_K)\} \ \leq \ \frac{D_2}{K} W_K
		\end{align*}
		by Lemma~\ref{lem:transformed}.
		Furthermore,		
		\begin{align*}
		  \sum_{C \in \mathcal{C}} |w_{C,k} - w'_{C,k}| \ \leq \ & w(\bar{M}_k) + w(\bar{M}'_k) \ \leq \ 2 D_3 \cdot \min \{w(\bar{M}_k), \, w(\bar{M}'_k)\} \ \leq \ 2 D_3 \cdot W_k
		\end{align*}
		by Lemma~\ref{lem:transformed}. This implies that 
		\begin{align*}
		  \sum_{C \in \mathcal{C}} (w_{C,k} - w'_{C,k})^2 \ \leq \ & \frac{2D_3 W_k}{\frac{D_2}{K} W_K} \left(\frac{D_2}{K} W_K\right)^2 
		  \ \leq \ \frac{2 D_2 D_3}{K} W_k W_K.
		\end{align*}
		
		For $k \geq K$ we introduce the random variable $W^*_k := w(M^* \cap (\bar{M}_k \cup \bar{M}'_k))$. Note that $\mathbb{E}[W^*_k]= W_k$. Again using Hoeffding's inequality and then applying the above observations, we deduce that
		\begin{align*}
		  \operatorname{Pr}\left(W^*_k \leq \left(1 - \frac{\delta}{2}\right) W_k \right) \ \leq \ &  \exp\left(-\frac{\frac{1}{2}\delta^2 W_k^2}{\sum_{C \in \mathcal{C}} (w_{C,k} - w'_{C,k})^2}\right) \\
		  \leq \ & \exp\left(-\frac{\delta^2 K W_k} {4 D_2 D_3 W_K}\right).
		\end{align*}
		Note that in order to guarantee that $W^*_k \geq (1 - \delta)W_k$ for all $k \geq K$, it suffices to check $W^*_k \leq (1 - \delta/2)W_k$ once at the end of every interval in which $W_k$ increases by a factor of at most $1 + \delta/2$. Formally, we define $k_0 = K$ and $k_{i + 1} := \min \{k > k_i \, : \, W_k \leq (1 + \delta/2) W_{k_i} \}$, stopping with $k_\ell = \infty$ for some $\ell \in \mathbb{N}$.
		Observe that if $W^*_{k_i} \geq (1 - \delta/2)W_{k_i}$, then 
		\[W^*_k \geq W^*_{k_i} \geq (1 - \delta/2)W_{k_i} \geq \frac{1 - \delta/2}{1 + \delta/2}W_k \geq (1 - \delta) W_k\]
		for all $k$ with $k_i \leq k < k_{i+1}$.
		Defining $q := \exp\left(-\frac{\delta^2} {4 D_2 D_3}\right)$, we therefore bound the probability
		\begin{align*}
		  \sum_{i = 0}^{\ell-1} \operatorname{Pr}\left(W^*_{k_i} \leq \left(1 - \frac{\delta}{2}\right) W_{k_i} \right) \ \leq \ & \sum_{i = 0}^{\infty} q^{\frac{(1 + \delta/2)^i W_K}{W_K} K} \ = \ q^K \underbrace{\sum_{i = 0}^{\infty} q^{(1 + \delta/2)^i}}_{< \infty}.
		  \end{align*}
		  
		  Observe that by choosing $K$ sufficiently large such that 
		  \[\frac{r^K}{1 - r} + q^K \sum_{i = 0}^{\infty} q^{(1 + \delta/2)^i} < 1\]
		  we can ensure that $N^*_k \leq (1 + \delta)k$ and $W^*_k \geq (1  - \delta)W_k$ for all $k \geq K$ with positive probability. This implies that there exists a matching $M^*$ that for every $k \geq K$ contains a subset of cardinality at most $(1 + \delta)k$ with a total weight of $(1 - \delta)W_k$. Choosing the top $k$ edges from this subset guarantees that
  \begin{align*}
		w(M^*_k) & \ \geq \ \frac{1-\delta}{1+\delta}W_k \ = \ \frac{1-\delta}{1+\delta} \left(\mu w(\bar{M}_k) + (1 - \mu) w(\bar{M}'_k)\right)\\
		 & \ \geq \ \underbrace{\frac{1-\delta}{1+\delta} (1 - \delta)}_{\geq 1 - \eps} \left(\mu w(M_k) + (1 - \mu) w(M'_k)\right).
	\end{align*} 
  This concludes the proof of Lemma~\ref{lem:asymptotic_merge}.

\section{An LP-based proof for the squared weights algorithm}\label{sec:sqrt2-proof}

Our final contribution is a new LP-based proof for the $1/\sqrt{2}$-robustness of the matching computed by the squared weights algorithm.

\begin{theorem}[Hassin, Rubinstein (2002)~\cite{hassin2002robust}]\label{thm:squared_weights}
Let $M \subseteq E$ be a matching maximizing the squared weights $\sum_{e\in M} w^2_e$. Then $M$ is $1/\sqrt{2}$-robust.
\end{theorem}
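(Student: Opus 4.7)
The plan is to prove the theorem by LP duality: for every cardinality $k$, I would exhibit a feasible solution to the dual of the $k$-matching LP whose objective value is at most $\sqrt{2}\,w(M_k)$. Weak duality, together with integrality of Edmonds' matching polytope, then gives $\OPT_k \leq \sqrt{2}\,w(M_k)$ as required. The $k$-matching LP maximizes $\sum_e w_e x_e$ subject to the vertex constraints $\sum_{e \ni v} x_e \leq 1$, the odd-set constraints $\sum_{e \in E[S]} x_e \leq \lfloor |S|/2 \rfloor$, the cardinality constraint $\sum_e x_e \leq k$, and non-negativity. Its dual has non-negative variables $\mu$ (for the cardinality constraint), $y_v$ (vertices), and $z_S$ (odd sets), with edge constraints $\mu + y_u + y_v + \sum_{S : e \in E[S]} z_S \geq w_e$ and objective $\mu k + \sum_v y_v + \sum_S \lfloor |S|/2 \rfloor z_S$.

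The construction will be anchored at an optimal dual $(y^*, z^*)$ of the (unrestricted) matching LP with squared weights $w^2$. By LP duality and complementary slackness, $\sum_v y^*_v + \sum_S \lfloor |S|/2 \rfloor z^*_S = \sum_{e \in M} w_e^2$, and $y^*_u + y^*_v + \sum_{S \ni e} z^*_S \geq w_e^2$ for every edge, with equality on $e \in M$. The natural first attempt is the uniform scaling $\mu := t/2$, $y_v := y^*_v/(2t)$, $z_S := z^*_S/(2t)$ for a parameter $t > 0$: feasibility $\mu + y_u + y_v + \sum z_S \geq t/2 + w_e^2/(2t) \geq w_e$ follows immediately from the AM--GM inequality $a + b \geq 2\sqrt{ab}$, and the dual objective becomes $tk/2 + \sum_{e \in M} w_e^2/(2t)$.

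The main obstacle is that optimizing $t$ in this scaled dual only yields $\OPT_k \leq \sqrt{k \sum_{e \in M} w_e^2}$, which is not always dominated by $\sqrt{2}\,w(M_k)$ (the bound can become loose when $M$ contains a single heavy edge alongside many light ones, so that $\sum_{e \in M} w_e^2$ exceeds $2\,w(M_k)^2/k$). To close this gap I would refine the construction for each $k$ by tailoring it to the structure of $M_k$: select $t$ and $\mu$ with the help of the $k$-th heaviest edge of $M$ (plausibly tying $\mu$ to the heaviest edge in $M \setminus M_k$ and $t$ to the heaviest edge in $M_k$) and threshold the vertex contributions via $y_v := \max\bigl(y^*_v/(2t) - \mu/2,\,0\bigr)$ so that vertices incident only to light edges of $M$ drop out of the dual objective. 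The technical heart of the proof will then be verifying feasibility for the non-$M$ edges $e = uv$ straddling $V(M_k)$ and its complement: here the squared-weight optimality of $M$ supplies the swap inequality $w_e^2 \leq w_{e(u)}^2 + w_{e(v)}^2$ (by considering the matching $(M \cup \{e\}) \setminus \{e(u), e(v)\}$), which exactly compensates for the defect introduced by the thresholding. Once feasibility is established, adding up the contributions from $M_k$ and the boundary edges should make the dual objective telescope to $\sqrt{2}\,w(M_k)$ for the right choice of $t$ and $\mu$, yielding the theorem.
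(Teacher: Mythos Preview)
Your LP-duality framework is the same as the paper's: transform an optimal dual of the squared-weight matching LP into a feasible dual for the $k$-matching LP with value at most $\sqrt{2}\,w(M_k)$. Two concrete gaps separate your sketch from a proof.

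First, the paper reduces to the bipartite case before touching duals: for any optimal $k$-matching $M^*$, the subgraph on $M_k \cup M^*$ is bipartite, so the odd-set variables never enter. You keep them, which is harmless in principle but complicates the construction unnecessarily.

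Second, and this is the real issue, your vertex-wise threshold $y_v := \max\bigl(y^*_v/(2t) - \mu/2,\,0\bigr)$ cannot yield dual value $\sqrt{2}\,w(M_k)$. For an edge $uv \in M_k$ with both endpoints surviving the threshold, complementary slackness gives $y^*_u + y^*_v = w_{uv}^2$, so the two endpoints together contribute $w_{uv}^2/(2t) - \mu$ to $\sum_v y_v$; adding back their share of the cardinality term, each such edge contributes $w_{uv}^2/(2t)$ to the dual objective. This is \emph{quadratic} in $w_{uv}$, whereas you need a linear contribution $\sqrt{2}\,w_{uv}$ for the total to telescope to $\sqrt{2}\,w(M_k)$. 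No single global $t$ reconciles this across $M_k$-edges of different weights. The paper resolves this by working edge by edge rather than vertex by vertex: after scaling so the lightest $M_k$-edge has weight~$1$, it sets the cardinality variable to $\sqrt{2}$, zeros the dual at every $v \notin V(M_k)$, and for each $uv \in M_k$ directly imposes the linear relation $y^*_u + y^*_v = \sqrt{2}(w_{uv} - 1)$, with a case-dependent split between $u$ and $v$ governed by the individual squared-weight dual values at $u$ and at $v$. The feasibility verification then relies on these per-vertex values, not only on the coarser swap inequality $w_e^2 \leq w_{e(u)}^2 + w_{e(v)}^2$ you cite.
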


\noindent
In order to prove Theorem~\ref{thm:squared_weights}, consider a fixed cardinality $k \in \nats$ and let $M^*$ be the optimal $k$-matching. Observe that, without loss of generality, we can restrict to the case of a complete bipartite graph, as the subgraph of `interesting' edges $(V,\, M_k\cup M^*)$ is bipartite. Therefore the matching~$M$ is an optimal solution to the following primal linear program:
\begin{align*}
\max\quad & \sum_{e\in E} w_e^2x_e\\
\text{s.t.}\quad & \sum_{e\in\delta(v)}x_e\leq1 & \forall\, v\in V\\
& x_e\geq0 & \forall\, e\in E &
\end{align*}
We will also consider the corresponding dual:
\begin{align*}
\min\quad&\sum_{v\in V}y_v^2\\
\text{s.t.}\quad& y_u^2+y_v^2\geq w_{uv}^2 & \forall\, uv\in E\\	
& y_v^2\geq0 & \forall\, v\in V
\end{align*}
Notice that we can denote the dual variable of node~$v$ by $y_v^2$ as it is non-negative. 

The $k$-matching~$M^*$ is an optimal solution to the following primal linear program:
\begin{align*}
\max\quad & \sum_{e\in E} w_ex^*_e\\
\text{s.t.}\quad & \sum_{e\in\delta(v)}x^*_e\leq1 & \forall\, v\in V\\
& \sum_{e\in E}x_e^*\leq k &\\
& x_e^*\geq0 & \forall\, e\in E 
\end{align*}
Again, we will also consider the dual linear program:
\begin{align*}
\min\quad&k\cdot z^*+\sum_{v\in V}y_v^*\\
\text{s.t.}\quad& z^*+y_u^*+y_v^*\geq w_{uv} & \forall\, uv\in E\\	
& y_v^*\geq0 & \forall\, v\in V
\end{align*}

Notice that $M_k$ is a feasible primal solution. The idea of our proof is to turn an optimal dual solution $y^2$ to the first pair of LPs into a feasible dual solution $y^*,z^*$ to the second pair of LPs whose objective function value is at most $\sqrt{2}$ times the weight of~$M_k$.

We describe the construction of $y^*,z^*$. For simplicity we assume (by scaling) that the cheapest edge in~$M_k$ has weight~$1$. We set~$z^*:=\sqrt{2}$ and $y^*_v:=0$ for all nodes $v$ that are not incident to an edge in~$M_k$. Finally, we are going to set the remaining variables~$y^*_v$ for nodes~$v$ incident to an edge in~$M_k$ in such a way that~$y^*_u+y^*_v=\sqrt{2}(w_{uv}-1)$ for all~$uv\in M_k$. Consider an edge $uv\in M_k$ and let $y_u^2\geq y_v^2$ such that $y_u\geq y_v$. If $y_v<1/\sqrt{2}$, then set
\begin{align*}
y^*_v:=0\qquad\text{and}\qquad y^*_u:=\sqrt{2}(w_{uv}-1).
\end{align*}
Notice that~$y^*_u\geq0$ as $w_{uv}\geq1$. If $y_v\geq 1/\sqrt{2}$, then set
\begin{align*}
& y^*_u:=\sqrt{2}\left(\frac{w_{uv}}{y_u+y_v}y_u-\tfrac12\right)\\
\text{and} \qquad & y^*_v:=\sqrt{2}\left(\frac{w_{uv}}{y_u+y_v}y_v-\tfrac12\right).
\end{align*}
Notice that these values are non-negative since, by complementary slackness, $y_u^2+y_v^2=w_{uv}^2$ and therefore $w_{uv}/(y_u+y_v)\geq 1/\sqrt{2}$. In particular we get \mbox{$y_u^*\geq y_u-1/\sqrt{2}$} and $y_v^*\geq y_v-1/\sqrt{2}$ in this case.

\begin{lemma}\label{lem:value}
The value of the constructed dual solution~$y^*,z^*$ is equal to $\sqrt{2}$ times the total weight of edges in~$M_k$.	
\end{lemma}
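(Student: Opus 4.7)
The proof will be a direct computation of the dual objective $k z^{*} + \sum_{v \in V} y^{*}_v$ using the construction described just before the lemma statement. The plan is to split the computation into three contributions: the $k z^{*}$ term, the contribution of vertices incident to $M_k$, and the contribution of vertices not incident to $M_k$. The last contribution is zero by the first line of the construction, so the only nontrivial work is to evaluate $y^{*}_u + y^{*}_v$ for each edge $uv \in M_k$ and to sum.

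First I would verify the key identity $y^{*}_u + y^{*}_v = \sqrt{2}(w_{uv} - 1)$ for every edge $uv \in M_k$, in each of the two cases that appear in the construction. In the case $y_v < 1/\sqrt{2}$ this holds by definition. In the case $y_v \geq 1/\sqrt{2}$, I would simply add the two assigned values and observe that
\[
y^{*}_u + y^{*}_v \;=\; \sqrt{2}\!\left(\tfrac{w_{uv}}{y_u + y_v}(y_u + y_v) - 1\right) \;=\; \sqrt{2}(w_{uv} - 1).
\]
Since $M_k$ is a matching, no vertex is shared between two edges of $M_k$, so $\sum_{v \in V} y^{*}_v$ decomposes as $\sum_{uv \in M_k} (y^{*}_u + y^{*}_v) = \sqrt{2}\, w(M_k) - \sqrt{2}\,|M_k|$.

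Combining with $k z^{*} = \sqrt{2} k$ yields a total dual value of $\sqrt{2} k + \sqrt{2}\, w(M_k) - \sqrt{2}\, |M_k|$. In the main case $|M_k| = k$ (the only interesting case, since otherwise $M$ is a maximum-cardinality matching with $|M| < k$ and the robustness bound is trivial to verify on the relevant subgraph), the $\sqrt{2} k$ and $-\sqrt{2}|M_k|$ terms cancel, leaving $\sqrt{2}\, w(M_k)$ as claimed. The only thing to be careful about is verifying that the two case distinctions in the construction are exhaustive and consistent (in particular that the asymmetry $y_u \geq y_v$ does not matter for the sum $y^{*}_u + y^{*}_v$), but this is immediate from the symmetry of the additive identity above.

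Overall this is essentially a bookkeeping calculation; there is no real obstacle, and I do not expect to need any properties of the original dual solution $y^2$ beyond those already used to ensure feasibility of the new solution, which is the job of the next lemma rather than this one.
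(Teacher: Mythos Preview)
Your proposal is correct and follows essentially the same approach as the paper's proof: both rely on the identity $y^{*}_u + y^{*}_v = \sqrt{2}(w_{uv}-1)$ for each $uv \in M_k$ together with $|M_k| = k$ and $z^{*} = \sqrt{2}$. You supply more detail (verifying the identity in each case and addressing the edge case $|M_k| < k$), but the paper already asserted the identity in the construction paragraph and simply assumes $|M_k| = k$, so its proof is a one-liner.
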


\begin{proof}
This follows easily from the fact that there are exactly~$k$ edges in~$M_k$ and~$y^*_u+y^*_v=\sqrt{2}(w_{uv}-1)$ for all~$uv\in M_k$.
\end{proof}

\begin{lemma}\label{lem:feasible}
The constructed dual solution~$y^*,z^*$ is feasible.
\end{lemma}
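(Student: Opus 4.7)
The plan is to verify $z^* + y^*_u + y^*_v = \sqrt{2} + y^*_u + y^*_v \geq w_{uv}$ for every edge $uv \in E$, using the squared-weight dual feasibility $w_{uv}\leq \sqrt{y_u^2+y_v^2}$ as the workhorse. Two easy reductions dispose of edges of $M$. If $uv\in M_k$, the construction gives $y^*_u+y^*_v=\sqrt{2}(w_{uv}-1)$, so the constraint holds with slack factor $\sqrt{2}$. If $uv \in M \setminus M_k$, then both endpoints are saturated only by $uv$ (matching property), so neither is incident to $M_k$ and $y^*_u = y^*_v = 0$; since $M_k$ collects the $k$ heaviest edges and the minimum weight in $M_k$ was normalised to $1$, we have $w_{uv}\leq 1 \leq \sqrt{2}$.

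For the remaining edges $uv \notin M$ I classify each endpoint $w$ into one of four statuses: (S) symmetric case, where both endpoints of $w$'s edge in $M_k$ have $y$-value at least $1/\sqrt{2}$; (A+) asymmetric case with $w$ the larger endpoint (forcing $y_w > 1/\sqrt{2}$ via $y_w^2 + y_x^2 = w_{wx}^2\geq 1$); (A-) asymmetric case with $w$ the smaller endpoint (so $y_w < 1/\sqrt{2}$); and (N) not incident to $M_k$. The construction then yields: $y^*_w \geq y_w - 1/\sqrt{2}$ in (S), $\sqrt{2}+y^*_w = \sqrt{2}\,w_{wx}$ in (A+) with $w_{wx}^2 = y_w^2+y_x^2\geq 1$, and $y^*_w = 0$ in both (A-) and (N). I will also use that in status (N), complementary slackness for the squared-weight LP forces $y_w \leq 1$: either $w$ is unsaturated by $M$ (so $y_w = 0$), or $w$ is incident to some edge of $M\setminus M_k$ whose weight is at most $1$, giving $y_w^2 \leq w_{wx'}^2 \leq 1$.

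It then remains to check the (at most) ten unordered pairs of statuses; in each I verify the stronger inequality $\sqrt{2}+y^*_u+y^*_v \geq \sqrt{y_u^2+y_v^2}$. Pairs with both $y^*_w = 0$ (namely (N,N), (A-,A-), (N,A-)) reduce to $y_u^2 + y_v^2 \leq 2$, immediate from the bounds above. The (S,S) case is transparent: $\sqrt{2} + y^*_u + y^*_v \geq y_u + y_v \geq \sqrt{y_u^2+y_v^2}$. The cases (A+,N), (A+,A-), (A+,A+), (S,A-), (N,S) each reduce to a short algebraic check after squaring, using $w_{wx}^2 = y_w^2+y_x^2$ together with $w_{wx}\geq 1$. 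The main obstacle I anticipate is the mixed case (S,A+): after squaring, the inequality becomes a polynomial in $y_u, y_v, y_y$ whose non-negativity seems to require simultaneously exploiting $y_u\geq 1/\sqrt{2}$, $w_{vy}\geq 1$, and $w_{vy}^2 = y_v^2+y_y^2\geq 1$, rather than any one alone; one clean route is to bound $2w_{vy}(\sqrt{2}y_u-1)\geq 2(\sqrt{2}y_u-1)\geq 0$ and then finish via $y_v^2+2y_y^2\geq 1$.
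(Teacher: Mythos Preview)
Your plan is correct and follows essentially the same route as the paper's proof: both arguments rest on the same three ingredients --- dual feasibility $w_{uv}^2 \leq y_u^2+y_v^2$, complementary slackness forcing $y_w \leq 1$ whenever $y^*_w = 0$, and the two lower bounds $y^*_w \geq y_w - 1/\sqrt{2}$ (symmetric) and $\sqrt{2}+y^*_w = \sqrt{2}\,w_{wx}$ (asymmetric large endpoint) --- and both finish by case analysis on the endpoints.

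The difference is purely organisational. The paper lumps your statuses (S), (A+), (A$-$) together via the single lower bound $y^*_w \geq \min\{\sqrt{2}(w_{wx}-1),\, y_w - 1/\sqrt{2}\}$, valid whenever $w$ is incident to $M_k$, and then splits on which term attains the minimum; it also first disposes of edges with $w_{uv}\leq\sqrt{2}$. This yields fewer top-level cases (essentially: one endpoint has $y^*=0$; or both are positive, with three sub-cases on the minima), at the cost of a nested distinction in the mixed (S,A+)-type branch. Your four-status classification is more granular (ten pairs), but each check is short and mechanical; your device of proving the stronger, edge-independent inequality $\sqrt{2}+y^*_u+y^*_v \geq \sqrt{y_u^2+y_v^2}$ is a nice simplification that the paper does not make explicit. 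Either organisation works; the paper's is slightly slicker, yours is more transparent to verify.
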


\begin{proof}
We need to show that
$z^*+y_u^*+y_v^*\geq w_{uv}$ for all $uv\in E$.
As $z^*=\sqrt{2}$, the inequality trivially holds for all edges of weight at most~$\sqrt{2}$ and for all edges in~$M_k$. So let us consider an edge $vv'\in E\setminus M_k$ of weight~$w_{vv'}=\sqrt{2+\eps}$ for some $\eps>0$. 
We will make use of the fact that $y_v^2+y_{v'}^2\geq w_{vv'}^2$ by feasibility of the dual solution~$y^2$.

We assume that $y_v^*\geq y_{v'}^*$ and distinguish two cases. If $y_{v'}^*=0$, then $y_{v'}\leq 1$ by construction of~$y^*$ and complementary slackness (note that either $v'$ is incident to an edge $u'v' \in M_k$ with $w_{u'v'} = 1$ or it is covered by an edge in $M \setminus M_k$, all of which have weight at most $1$).
Since $y_v^2+y_{v'}^2\geq w_{vv'}^2$, we conclude that $y_v\geq\sqrt{1+\eps}$. Thus, there is an edge $uv\in M_k$. Then, $w_{uv} = \sqrt{y_v^2 + y_w^2} \geq y_v\geq\sqrt{1+\eps}$ and thus, by construction of~$y^*$,
\begin{align*}
z^*+y_v^*+y_{v'}^*&\geq \sqrt{2}+\min\bigl\{\sqrt{2}(w_{uv}-1),y_v-1/\sqrt{2}\bigr\}\\
&\geq\min\bigl\{\sqrt{2}\sqrt{1+\eps},\sqrt{1+\eps}+1/\sqrt{2}\bigr\}\\
&\geq\sqrt{2+\eps}=w_{vv'}.
\end{align*}
This concludes the first case. We now consider the case $y_v, \, y_{v'} > 0$ and let $uv, u'v' \in M_k$. We get 
	\begin{align*}
		z^* + y^*_v + y^*_{v'} \ \geq \ & \sqrt{2} + \min \left\{\sqrt{2}(w_{uv} - 1),\ y_v - 1/\sqrt{2}\right\}\\
		&  + \min \left\{\sqrt{2}(w_{u'v'} - 1),\ y_{v'} - 1/\sqrt{2}\right\}.
	\end{align*}
	We distinguish three cases for the possible combinations of the minima and show that in each case the right hand side value is at least $w_{vv'}$, implying feasibility of the solution.
	\begin{itemize}
		\item Assume both minima are attained by the first expression. Note that $$\min \{w_{uv}, \, w_{u'v'}\} \geq 1 \text{ and } \max \{w_{uv}, \, w_{u'v'}\} \geq w_{vv'}/\sqrt{2},$$ where the second inequality follows from $w_{uv}^2 + w_{u'v'}^2 \geq y_{v}^2 + y_{v'}^2 \geq w_{vv'}^2$. We thus get
			\begin{align*}
					z^* + y^*_v + y^*_{v'} \ =  & \ \sqrt{2}(w_{uv} + w_{u'v'} - 1)\\
					 \ \geq & \ \sqrt{2}(w_{vv'} / \sqrt{2} + 1 - 1) \ = \ w_{vv'}.
			\end{align*}
		\item If both minima are attained by the second expression, then 
			$z^* + y^*_v + y^*_{v'} \ = \ y_v + y_{v'} \ \geq\ w_{vv'}$.
		\item Finally, assume the first minimum is attained by the first expression and the second minimum is attained by the second expression (the remaining case follows by symmetry). We obtain 
		\begin{align*}
			z^* + y^*_v + y^*_{v'} \ =\ & \sqrt{2} w_{uv} + y_{v'} - 1/\sqrt{2}\\
		 \ \geq \ & \sqrt{2} \max\left\{1, \ \sqrt{w_{vv'}^2 - y_{v'}^2}\right\}\\
		 & \quad + y_{v'} - 1/\sqrt{2}
		\end{align*}
		as $w_{uv} \geq 1$ and $w_{uv}^2 + y_{v'}^2 \geq y_{v}^2 + y_{v'}^2 \geq w_{vv'}^2$. Note that $1 \geq \sqrt{w_{vv'}^2 - y_{v'}^2}$ if and only if $y_{v'} \geq \sqrt{w_{vv'}^2 - 1}$. We perform a final case distinction.
		\begin{itemize}
			\item If $y_{v'} \geq \sqrt{w_{vv'}^2 - 1}$, then 
				\begin{align*}
					z^* + y^*_v + y^*_{v'} \ \geq \ & \sqrt{2} + \sqrt{w_{vv'}^2 - 1} - 1/\sqrt{2},
				\end{align*}
				which is at least $w_{vv'}$ because $w_{vv'} > \sqrt{2}$.
			\item If $y_{v'} < \sqrt{w_{vv'}^2 - 1}$, then 
				\begin{align*}
					z^* + y^*_v + y^*_{v'} \ \geq \ & \sqrt{2 (w_{vv'}^2 - y_{v'}^2)} + y_{v'} - 1/\sqrt{2}
				\end{align*}
				Observe that the right hand side is concave in $y_{v'}$ and therefore its minimum for $y_{v'} \in [1/\sqrt{2}, \sqrt{w_{vv'}^2 - 1}]$ is attained at one of the endpoints of the interval.
				Note that the case $y_{v'} = \sqrt{w_{vv'}^2 - 1}$ has already been handled above. For $y_{v'} = 1/\sqrt{2}$ the right hand side becomes $\sqrt{2w_{vv'}^2 - 1} \geq w_{vv'}$, again because $w_{vv'} > \sqrt{2}$.
		\end{itemize} 
	\end{itemize}
This concludes the proof of the lemma.
\end{proof}

As Theorem~\ref{thm:squared_weights} follows immediately from Lemmas~\ref{lem:value} and~\ref{lem:feasible}, this concludes the proof.

\section{Conclusion and future research}

In this paper, we studied new variants of Hassin and Rubinstein's classic result on robust matchings. In particular we proved that a randomized matching with robustness $1/\ln(4)$ always exists and we showed that the randomized setting is closely related to the asymptotic setting, in which only large cardinalities are considered. 
While the analysis of our algorithm is tight, a better guarantee might still be derived from a different approach and closing the gap to the upper bound of $(1 + 1/\sqrt{2})/2$ on the best possible guarantee established by the worst-case instance remains as an interesting task for future research.

In Section~\ref{sec:max-priority}, we established a connection between the problem of finding optimal mixed strategies for Alice and Bob in the robust matching game and the maximum priority matching problem, a stochastic variant of matching with uncertain cardinality constraint. This connection provided interesting insights in two ways: Firstly, it allowed us to translate our $1/\ln(4)$-robustness for  randomized matchings into an approximation guarantee for maximum priority matchings. We expect that this connection between randomized strategies in robust optimization and the stochastic optimization variant can also be exploited for other optimization problems. Secondly, we observed that the $\NP$-hardness of maximum priority matching implies $\NP$-hardness for the separation problem associated with finding an optimal robust randomized matching. However, as we pointed out, this does not imply that the problem itself is $\NP$-hard, and hence the complexity of finding an optimal strategy for Alice remains an intriguing open question.

In Section~\ref{sec:asymptotic}, we showed how to transform the $1/\ln(4)$-robustness for randomized matchings into an asymptotic robustness guarantee for deterministic matchings.  For proving this result we used a probabilistic argument that shows the existence of the corresponding deterministic matching. It is open whether our construction can be derandomized  to obtain an efficient algorithm for computing the asymptotically robust matching. 
Finally, it would also be interesting to find out whether a guarantee for randomized robustness always translates to an asymptotic robustness guarantee, without conditions on the support of the distribution; see Conjecture~\ref{con:asymptotic}.

\section*{Acknowledgments.}
We thank two anonymous referees for helpful comments that in particular led to Remark~\ref{rem:best-response} and Conjecture~\ref{con:asymptotic}.
This work was supported by the German Academic Exchange Service~(DAAD) with funds of BMBF and the EU Marie Curie Actions, by Nucleo Milenio Informaci\'on y
Coordinaci\'on en Redes ICM/FIC RC130003, FONDECYT grant 11130266 and CONICYT PCI PII 20150140", by the Research Center \textsc{Matheon} in Berlin, and by the Einstein Foundation Berlin.

\bibliographystyle{plain}
\bibliography{references}

\section*{Appendix: Examples for strong $2$-exchange systems}

We complete our discussion of strong $2$-exchange systems by showing that this class of systems contains $b$-matchings, strongly base orderable matchoids and, more generally, strongly base orderable matroid parity systems. For completeness, we recall the definition of these systems.
Given a graph $G=(V,E)$ and a vector $b \in \nats^{V}$ of degree bounds, the set of \emph{$b$-matchings} of $G$ is the collection $\mathcal{M}_b(G)=\{F \subseteq E \colon \deg_{F}(v)\leq b_v \ \forall\, v \in V\}$. If instead of having a vector of degree bounds, we are given a collection of matroids $\{M_v\}_{v\in V}$, where the ground set of $M_v$ is the star $\delta_G(v)$ of $v$, the associated \emph{matchoid} is the independence system that consists of all $F \subseteq E$ such that $F\cap \delta(v)$ is independent in $M_v$ for all $v \in V(G)$. Finally, given a 1-regular graph $G=(V,E)$ and a matroid $M$ on $V$, the \emph{matroid parity system} $\calJ(G,E)$ is the system whose independent sets are the subsets $F\subseteq E$ of edges such that the set $\bigcup F \subseteq V$ of vertices covered by $F$ is independent in the matroid~$M$. We call a matchoid SBO if all the matroids $M_v$ are strongly base orderable\footnote{A matroid $M$ is strongly base orderable if for every pair of bases $U$ and $W$ of $M$ there is a \emph{base exchange} bijection $\pi\colon U \to W$  such that for all $Z\subseteq U, Z \cup \pi(U\setminus Z)$ is a base of $M$. Strongly base orderable matroids include partition matroids, transversal matroids and gammoids.}. Similarly, we say that  a matroid parity system is SBO if its associated matroid is strongly base orderable.

Note that $b$-matchings are a special case of SBO matchoids where each associated matroid is uniform, and thus strongly base orderable. It is also easy to see that every matchoid is a matroid parity system: Given a matchoid $\calI=\calI(G,(M_v)_{v\in V})$ we can construct a graph $G'$ by replacing each vertex of $v$ by $\deg_G(v)$ distinct copies, and replacing each edge $e=uv \in E(G)$ by an edge $e'$ connecting a copy of $u$ and a copy of $v$ in such a way that the new edges are pairwise disjoint. The resulting graph $G'=(V',E')$ is 1-regular. The matroid $M_v$, originally defined on the star $\delta_G(v)$, can be seen as a matroid $M'_v$ on the corresponding $\deg_G(v)$ copies of $v$ in $V'$ (where we associate each edge $e$ of $\delta_G(v)$ with the unique copy of $v$ contained in $e'$). Define a new matroid $M$ on ground set $V'$ by taking the union of all matroids $M'_v$. It is easy to see that $F\subseteq E$ is independent in $\calI$ if and only if $E'$ is independent in the matroid parity system $\calJ(G',M')$. Furthermore if all matroids $M_v$ are strongly base orderable, then so is the matroid $M'$. To see this, let $U$ and $W$ be two bases of $M'$  and let $U(v)$ and $W(v)$ for $v \in V$ be the collection of copies of $v$ included in $U$ and $W$ respectively. Then $U(v)$ and $W(v)$ are bases of the  strongly base orderable matroid $M'_v$, and thus there is a base exchange bijection $\pi_v$ from $U(v)$ to $W(v)$. By taking the union of all these bijections we obtain a base exchange bijection $\pi$ from $U$ to $W$.

By the previous paragraph, we only need to show the following result:

\begin{lemma}\label{thm:matchoid}
Let $\calJ(G,M)$ be the matroid parity system associated to the 1-regular graph $G=(V,E)$ and the strongly base orderable matroid $M$ on $V$, then $\calJ$ is  a strong 2-exchange system.
\end{lemma}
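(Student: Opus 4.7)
The plan is to exploit strong base orderability of $M$ at the vertex level. Given $X, Y \in \calJ(G,M)$, let $V_X := \bigcup X$ and $V_Y := \bigcup Y$ denote the sets of vertices covered by $X$ and $Y$; both are independent in $M$ by definition of a matroid parity system. Set $V_0 := V_X \cap V_Y$, $V_1 := V_X \setminus V_0$, and $V_2 := V_Y \setminus V_0$. Since $G$ is $1$-regular, each vertex belongs to exactly one edge, so distinct edges are pairwise vertex-disjoint; consequently $V_1$ is precisely the set of vertices covered by $X \setminus Y$, $V_2$ that covered by $Y \setminus X$, and $V_1 \cap V_2 = \emptyset$.

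I would then pass to the contracted matroid $M' := M / V_0$, which is again strongly base orderable since the SBO class is minor-closed. Both $V_1$ and $V_2$ are independent in $M'$; I would extend them to bases $B_1 \supseteq V_1$ and $B_2 \supseteq V_2$ of $M'$ and invoke SBO to obtain a bijection $\pi \colon B_1 \to B_2$ such that $(B_1 \setminus S) \cup \pi(S)$ is a base of $M'$ for every $S \subseteq B_1$; equivalently, by applying the property to $B_1 \setminus S$, the set $S \cup (B_2 \setminus \pi(S))$ is a base for every $S \subseteq B_1$.

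Next, I would define the exchange graph $G(X,Y)$ as follows: for every $e = \{u_1, u_2\} \in X \setminus Y$ and every $i \in \{1,2\}$ with $\pi(u_i) \in V_2$, connect $e$ to the unique $f \in Y \setminus X$ containing $\pi(u_i)$. Each $e \in X \setminus Y$ then has degree at most $2$, and on the opposite side $f = \{w_1, w_2\} \in Y \setminus X$ is adjacent to $e$ only if $\pi^{-1}(w_1)$ or $\pi^{-1}(w_2)$ is an endpoint of $e$ lying in $V_1$, which again gives degree at most $2$ since $\pi$ is a bijection. To verify the defining exchange property, fix $A \subseteq X \setminus Y$ and let $V_A := \bigcup A \subseteq V_1$. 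By construction $\bigcup N_{G(X,Y)}(A) \supseteq \pi(V_A) \cap V_2$, so the vertices covered by $A \cup (Y \setminus N_{G(X,Y)}(A))$ lie in $V_0 \cup V_A \cup (V_2 \setminus \pi(V_A)) \subseteq V_0 \cup V_A \cup (B_2 \setminus \pi(V_A))$; the SBO exchange ensures that $V_A \cup (B_2 \setminus \pi(V_A))$ is a base of $M'$, hence $V_0 \cup V_A \cup (B_2 \setminus \pi(V_A))$ is independent in $M$, yielding $A \cup (Y \setminus N_{G(X,Y)}(A)) \in \calJ$.

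The main obstacle will be handling the potential size mismatch between $V_1$ and $V_2$ (equivalently, $|X \setminus Y| \neq |Y \setminus X|$), since SBO is formulated for equicardinal bases; extending both $V_1$ and $V_2$ to full bases $B_1, B_2$ of $M'$ before invoking SBO is the clean way around this. The remaining technical point is the claim that SBO is preserved under contraction, which is a standard closure property of the SBO class that I would cite rather than reprove.
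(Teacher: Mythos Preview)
Your proof is correct and follows the same core idea as the paper's: contract the vertices covered by $X\cap Y$, invoke the SBO bijection $\pi$ at the vertex level, and let the exchange graph inherit its edges from $\pi$. The verification of the degree bounds and of the inclusion $\bigcup\bigl(A\cup(Y\setminus N_{G(X,Y)}(A))\bigr)\subseteq V_0\cup V_A\cup(B_2\setminus\pi(V_A))$ goes through exactly as you wrote.

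The only notable divergence is how you handle the size mismatch $|V_1|\neq |V_2|$. The paper pads the smaller side with dummy edges whose endpoints are added to the matroid as coloops, then truncates so that the (padded) vertex sets become bases; the resulting exchange graph is obtained by deleting the dummy edges afterwards. Your approach instead extends $V_1$ and $V_2$ to bases $B_1,B_2$ of $M/V_0$ using whatever elements are already in the ground set, and simply ignores the extra elements when defining the exchange graph (by only drawing an edge when $\pi(u_i)\in V_2$). Both devices serve the same purpose---getting two equicardinal bases so that the SBO bijection is available---but yours is a bit more economical: it needs only closure of SBO under contraction, whereas the paper also uses closure under adding coloops and truncation. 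Either way, the argument is essentially the same.
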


\begin{proof}
Let $X$ and $Y$ be two independent sets in $\calJ(G,M)$, and $s=||X|-|Y||$. Let $E_0=\{a_ib_i \colon 1 \leq i \leq s\}$ be a set of $s$ dummy edges outside $E$, where $V_0=\{a_1, \dots, a_s, b_1, \dots, b_s\}$ are $2s$ new dummy vertices. If $|X|<|Y|$, let $X'=(X\setminus Y)\cup E_0$ and $Y'=(Y\setminus X)$, otherwise, let $X'=(X\setminus Y)$ and $Y'=(Y\setminus X)\cup E_0$, in such a way that $|X'|=|Y'|=:r$.

Consider the matroid $M'$ on $V(G')$ obtained by first contracting $C:=\bigcup(X\cap Y)$, then adding elements $V_0$ to $M$ as coloops\footnote{An element $x$ is a coloop of a matroid if it is included in every basis.} and then truncating it to rank $2r$. It is easy to check that strong base orderability is closed under adding coloops, contractions, deletions and truncation (see, \eg, \cite{Schrijver} for an exposition). Therefore, the matroid $M'$ is also strongly base orderable and furthermore $U=\bigcup X'$ and $W=\bigcup Y'$ are bases of~$M'$.

Consider a base exchange bijection $\pi\colon U \to W$ for the matroid $M'$ from  $U$ to $W$, and let $H$ be the bipartite graph with color classes $X'$ and $Y'$ having an edge between $e=uv \in X'$ and $f=u'v'\in Y'$ if $\pi(\{u,v\}) \cap \{u',v'\} \neq \emptyset$. We claim that the subgraph $H'$ of $H$ obtained by deleting the dummy vertices and edges is a 2-exchange graph from $X$ to $Y$ in $\calJ(G,M)$.

Indeed, $H'$ is a bipartite graph with color classes $X\setminus Y$ and $Y\setminus X$, with maximum degree at most 2. Furthermore, let $A \subseteq X\setminus Y$ be a set of vertices in $H'$ (which are edges in $G$). We will  show that the set $F:=A \cup (Y\setminus N_{H'}(A))$ is independent in $\calJ(G,M)$.

Indeed, let $Z=\bigcup A \subseteq U$ be the vertices of $G$ covered by $A$. Then the set of vertices of $G$ covered by $F$ is
\begin{align*}
\bigcup F &\subseteq \bigcup ( (X\cap Y) \cup A \cup (Y' \setminus N_{H}(A)) \\
&\subseteq C \cup Z \cup \bigcup (Y' \setminus N_H(A))\\ 
&\subseteq C \cup Z \cup \pi(U\setminus Z).
\end{align*}

Since $Z\cup \pi(U\setminus Z)$ is a base of $M'$ we conclude that $\bigcup F$ is independent in $M$. Therefore, $F$ is independent in $\calJ(G,M)$.
\end{proof}

\end{document}